\newtheorem{theorem}{Theorem}
\newtheorem{lemma}{Lemma}
\newtheorem{defn}{Definition}
\newtheorem{prop}[theorem]{Proposition}
\newtheorem{cor}[theorem]{Corollary}
\newcommand{\ben}{\begin{equation*}}
\newcommand{\een}{\end{equation*}}
\newcommand{\comment}[1]{}
\begin{document}

\title{DNA codes over two noncommutative rings of order four}

\author{Jon-Lark Kim\thanks{Department of Mathematics,
Sogang University,
Seoul 04107, South Korea.
{Email: \tt jlkim@sogang.ac.kr},
Corresponding author,
}

~~Dong Eun Ohk\thanks{
             Department of Mathematics, Sogang University, Seoul 04107, Republic of Korea}
}

\date{}

\maketitle

\begin{abstract}
In this paper, we describe a new type of DNA codes over two noncommutative rings $E$ and $F$ of order four with characteristic 2. Our DNA codes are based on quasi self-dual codes over $E$ and $F$. Using quasi self-duality, we can describe fixed GC-content constraint weight distributions and reverse-complement constraint minimum distributions of those codes.
\end{abstract}

\noindent
{\bf keywords} Coding theory, DNA codes, quasi-self dual codes, rings

	\section{Introduction}

L. Adleman~\cite{Adleman} performed the computation using DNA strands to solve an instance of the Hamiltonian path problem giving birth to DNA computing. Since then, DNA computing and DNA storage have been developed. This development requires several theories for the construction of DNA sequences satisfying various constraints. Algebraic coding theory has contributed to construct DNA codes with constraints. See  Limbachiya et al.~\cite{Limbachiya} for up-to-date DNA codes over various rings

DNA codes based on error-correcting codes have been successful in DNA-based computation and storage. For example, Milenkovic~\cite{Mil} described the design of codes for DNA computing by considering secondary
structure formation in single-stranded DNA molecules and non-selective
cross-hybridization.

Since there are four nucleobases in DNA, two well known algebraic structures such as the finite field $GF(4)$ and the integer modular ring $\mathbb{Z}_4$ have been used because they have been well studied in coding community. Besides those with four elements, there has been research on DNA codes over $\mathbb F_2 + u\mathbb F_2$ where $u^2=0$, $\mathbb F_2 + v \mathbb F_2$ where $v^2=v$, or $\mathbb F_2[u]/(u^2-1)$. For example, the papers~\cite{Guenda},~\cite{Liang}, ~\cite{Siap} use two rings $\mathbb F_2 + u\mathbb F_2$ and $\mathbb F_2[u]/(u^2-1)$. They only consider cyclic DNA codes over these rings. Even though some cyclic DNA codes over these rings are reversible or reversible-complementary under certain conditions, the length $n$ is restricted. Our tables in Page 18 and 19 contain all possible parameters of quasi-self-dual DNA codes with highest minimum weight for a given fixed GC-weight. Certainly this is a new result.

We note that these rings are all commutative rings with unity. From calculation point of view, certainly these rings have an advantage. On the other hand, due to various possibilities of DNA sequences, it is natural to ask whether there are other algebraic structures consisting of four elements.

According to literature~\cite{Fine}, B. Fine classified the 11 finite rings of order four. We observe that the reverse-complement condition of DNA sequences can be translated as a product of multiplication in a ring and prove that exactly 8 rings of order four out of 11 can be studied. These 8 rings include exactly two noncommutative rings with no unity, denoted by $E$ and $F$ in the notation of~\cite{Fine}. This is one of reasons studying DNA codes over $E$ and $F$.

Another reason is motivated by the recent work on quasi-self-dual codes over $E$~\cite{Alahmadi1}. The authors~\cite{Alahmadi1} studied the weight enumerators of these codes by means of invariant theory. Hence the GC weight enumerator of a quasi-self-dual code over $E$ can be derived from the complete weight enumerator of a quasi-self-dual code over $E$ (see Theorem~\ref{thm-GC}). Noting that in general it is difficult to compute the GC weight distribution of a DNA code, we apply a quasi-self-dual code over $E$ to a DNA code to find a subcode of the DNA code which has a fixed GC-content constraint.


This paper consists of six sections. In Section 2, we introduce DNA codes and some definitions of DNA codes. All finite rings of order four will be covered and we define some generalized maps on DNA codes. In Section 3 and 4, we construct Quasi self-dual (QSD) DNA codes based on the QSD codes over $E$ which was considered in~\cite{Alahmadi1}. We also calculate important values of QSD DNA codes over $E$, including the number of inequivalent codes, the GC-weight distribution and the minimum distance of a fixed GC-content subcode on reverse-complement constraints. In Section 5, we  also define QSD DNA codes over $F$ and  compute GC-weight distributions. The classification of QSD DNA codes with $n \le 8$ are listed
in the table at the end. This improves the previous classification of QSD codes with $n \le 6$ by~\cite{Alahmadi1}. Section 6 concludes our paper.

\section{Preliminaries}

\subsection{DNA coding theory}

DNA coding theory is concerned about designing nucleic acid systems using error-correcting codes. DNA, deoxyribo nucleic acid, is a molecule composed of double strands built by paring the four units, Adenine, Thymine, Guanine, and Cytosine, denoted by $A, T, G$ and $C$ respectively, which are called nucleotides. These nucleotides are joined in chains which are bound together with hydrogen bounds. $A$ and $T$ have 2 hydrogen bonds while $G$ and $C$ have 3 hydrogen bonds. Thus these joints make complementary base pairings which are $\{A,T\}$ and $\{G,C\}$. It is called the {\em Watson-Crick complement}. We denote it by $A^C=T$ and $G^C=C$, or equivalently $T^C=A$ and $C^C=G$. So this complement map is a bijection on the set $\{A,T,G,C\}$.

A DNA sequence is a sequence of the nucleotides. The ends of a DNA sequence are chemically polar with $5'$ and $3'$ ends, which implies that the strands are oriented. Given a sequence with the orientation $5' \to 3'$, the reverse complementary is involved naturally. For instance, a DNA sequence
\begin{center}
$5'-TCGGCAACATG-3'$
\end{center}
has its complement
\begin{center}
$3'-AGCCGTTGTAC-5'$.
\end{center}
If we arrange these sequences to have the same orientation, then there are two sequences
\begin{center}
$5'-TCGGCAACATG-3'$ and $5'-CATGTTGCCGA-3'$.
\end{center}
Note that one sequence is the reverse complement of the other.

A DNA sequence means one strand of DNA. The set of DNA strands are needed for DNA computing. Thus we define the DNA code as a fixed set of sequences consisting of $A,T,G,C$, which are also called codewords.

\begin{defn}{\rm
An {\em $(n,M)$ DNA code} $\mathcal{C}$ is a set of codewords of length $n$, size $M$ over four alphabets, $A,T,G,C$. A {\em DNA codeword}, or a {\em DNA sequence} is a codeword of a DNA code.
}
\end{defn}
In general, a DNA code does not need to have algebraic structures. However, in some DNA computation and DNA storage, an error-correction is required. Furthermore to use Algebraic coding theory we expect that the set of DNA sequences has an algebraic structure. Therefore in DNA coding theory, we identify the set $\{A,T,G,C\}$ with an order 4 ring. Since there are 4 types of nucleotides, DNA codes can be constructed from algebraic structures of rings with 4 elements such as the finite field $GF(4)$ or the finite ring $\mathbb{Z}_4$. In fact, any code over 4 elements can be a DNA code, but it will be difficult to analyze properties.

\begin{defn}{\rm
Let $\mathrm{x}=(x_1x_2\cdots x_n)$ be given ($i.e., \; x_i \in \{A,T,G,C\}$).
\begin{enumerate}
\item The {\em reverse} of $\mathrm{x}$, denoted by ${\mathrm{x}}^R$, is the codeword $(x_n x_{n-1}\cdots x_1)$.
\item The {\em complement} of $\mathrm{x}$, denoted by ${\mathrm{x}}^C$, is the codeword $({x_1}^C{x_2}^C\cdots {x_n}^C)$.
\item The {\em reverse complement} of $\mathrm{x}$ is ${\mathrm{x}}^{RC} = {({\mathrm{x}}^R)}^C = {({\mathrm{x}}^C)}^R$.
\end{enumerate}
}
\end{defn}

We can easily check that ${({\mathrm{x}}^R)}^C = {({\mathrm{x}}^C)}^R$ for any DNA sequence $\mathrm{x}$. Using these definitions, we can give constraints on DNA codes.

\begin{defn}{\rm
Let $\mathcal{C}$ be a DNA code, and $d_H$ be the Hamming distance.
\begin{enumerate}
\item The code $\mathcal{C}$ has the {\em reverse constraint} if there exists $d\ge0 $ such that $d_H(x^R,y) \ge d$ for all $x,y \in \mathcal{C}$.
\item The code $\mathcal{C}$ has the {\em reverse-complement constraint} if there exists $d\ge0 $ such that $d_H(x^{RC},y) \ge d$ for all $x,y \in \mathcal{C}$.
\end{enumerate}
}
\end{defn}

Note that the reverse map $\_^R : F^n \to F^n$ is not a linear map ($F$ is a 4-element ring). This reverse map is a permutation, so that it is not independent of the permutation equivalence. For example, let $\mathcal{C} = \{ ATTC, CGGA\}$. Then ${(ATTC)}^R = (CTTA)$, ${(CGGA)}^R = (AGGC)$, so that $d_H(x^R,y) \ge 2$ for all $x,y \in \mathcal{C}$. A permuted code $\mathcal{C}^{\prime} = \{ ATCT, CGAG\}$ has $\min\{d_H(x^R,y)\} = 4$ since $\mathcal{C}^{\prime R} = \{TCTA, GAGC\}$. So for these reverse constraints, we do not consider the permutation equivalence.

In genetics, it is required to compute the $GC$-content. The $GC$-content is the percentage of $G$ and $C$ in a DNA. Since $GC$ pair is held by 3 hydrogen bonds and $AT$ pair is held by 2 hydrogen bonds, high $GC$-content DNAs are more stable than low $GC$-content DNAs. On the other hand, if $GC$-content is too high, then it is difficult to occur DNA replication. Therefore we need to set a proper $GC$-content. In DNA coding theory, we define the $GC$-content as the number of coordinates of $G$ and $C$.
\begin{defn}{\rm
Let $\mathcal{C}$ be a DNA code and $\mathrm{x}$ be a codeword in $\mathcal{C}$.
\begin{enumerate}
\item The {\em $GC$-content} of $\mathrm{x}$ is the number of $G$ and $C$ in $\mathrm{x}$.
\item The DNA code $\mathcal{C}$ has a fixed {\em $GC$-content constraint} if each codeword in $\mathcal{C}$ has the same $GC$-content.
\end{enumerate}
}
\end{defn}

Many codes do not satisfy the fixed $GC$-content constraint. For the fixed $GC$-content constraint, we need to calculate the set of codewords which have the same $GC$-content. Therefore we need the $GC$-weight enumerator.

\begin{defn}{\rm
Let $\mathcal{C}_1$ be a code over 4 elements $\{a_1,a_2,a_3,a_4\}$ and $\mathcal{C}_2$ be a DNA code.
\begin{enumerate}
\item The {\em complete weight enumerator of the code} $\mathcal{C}_1$, $CWE_{\mathcal{C}_1}(w,x,y,z)$ is defined by
\begin{equation*}
CWE_{\mathcal{C}_1}(w,x,y,z) = \sum_{c\in {\mathcal{C}_1}} w^{n_{a_1}(c)}x^{n_{a_2}(c)}y^{n_{a_3}(c)}z^{n_{a_4}(c)}
\end{equation*}
where $n_\alpha(c)$ is the number of occurrences of $\alpha \in \{a_1,a_2,a_3,a_4\}$ in a codeword $c$.
\item The {\em $GC$-weight enumerator of the code} $\mathcal{C}_2$, $GCW_{\mathcal{C}_2}(x,y)$ is the weight enumerator that counts the number of coordinates in $\{G,C\}$ and $\{A,T\}$, which is defined by
\begin{equation*}
GCW_{\mathcal{C}_2}(x,y)=CWE_{\mathcal{C}_2}(x,x,y,y) = \sum_{c\in {\mathcal{C}_2}} x^{n_{G}(c)}x^{n_{C}(c)}y^{n_{A}(c)}y^{n_{T}(c)}
\end{equation*}
\end{enumerate}
}
\end{defn}

We can get the size of a subcode $\mathcal{C}^{\prime}$ which has a fixed $GC$-content using the given polynomial $GCW_{\mathcal{C}}(x,y)$. If $GCW_{\mathcal{C}}(x,y) = \sum a_i x^i y^{n-i}$, then the subcode has the order $| \, \mathcal{C}^{\prime} \, | = a_k$ where $\mathcal{C}^{\prime} = \{ c \in \mathcal{C} \, | \, c \mbox{ has a fixed $GC$-content $k$ (or $n-k$)} \}$.

\subsection{Finite rings of order 4}

Since DNA codes are based on 4 elements, we need 4 elements set. For example, a quaternary code, ${GF(4)}^n$, is a linear code defined over 4 elements field $GF(4)$. This code can be identified a DNA code. Most DNA codes are constructed using $GF(4)$ or $\mathbb{Z}_4$, which are typical rings of 4 elements. However, B. Fine classified rings of order $p^2$ up to isomorphism and so there are 11 finite rings of 4 elements~\cite{Fine}. It is possible to construct DNA codes using other finite rings. The main goal of this paper is to construct DNA codes over other rings except for $GF(4)$ and $\mathbb{Z}_4$, especially the ring $E$.
\vskip 5mm

\begin{table}[h]
\begin{center}
\begin{tabular}{ | c | c | c | c |}
\hline
ring name & ring presentation & char \\
\hline
$A$ & $\left\langle a; \, 4a=0,a^2=a \right\rangle$ & 4 \\
\hline
$B$ & $\left\langle a; \,4a=0,a^2=2a \right\rangle$ & 4 \\
\hline
$C$ & $\left\langle a; \,4a=0,a^2=0 \right\rangle$ & 4 \\
\hline
$D$ & $\left\langle a,b; \,2a=2b=0,a^2=a, b^2=b, ab=ba=0 \right\rangle$ & 2 \\
\hline
$E$ & $\left\langle a,b; \,2a=2b=0,a^2=a, b^2=b, ab=a, ba=b \right\rangle$ & 2 \\
\hline
$F$ & $\left\langle a,b; \,2a=2b=0,a^2=a, b^2=b, ab=b, ba=a \right\rangle$ & 2 \\
\hline
$G$ & $\left\langle a,b; \,2a=2b=0,a^2=0, b^2=b, ab=ba=a \right\rangle$ & 2 \\
\hline
$H$ & $\left\langle a,b; \,2a=2b=0,a^2=0, b^2=b, ab=ba=0 \right\rangle$ & 2 \\
\hline
$I$ & $\left\langle a,b; \,2a=2b=0,a^2=b, ab=0 \right\rangle$ & 2 \\
\hline
$J$ & $\left\langle a,b; \,2a=2b=0,a^2=b^2=0 \right\rangle$ & 2 \\
\hline
$K$ & $\left\langle a,b; \,2a=2b=0,a^2=a, b^2=a+b, ab=ba=b \right\rangle$ & 2 \\
\hline
\end{tabular}
\caption{Classification table of finite rings of order 4}
\end{center}
\end{table}

\begin{table}[h]
\begin{center}
\begin{tabular}{ c | c | c | c | c | c |}
+ & 0 & $a$ & $2a$ & $3a$ \\
\hline
0 & 0 & $a$ & $2a$ & $3a$ \\
\hline
$a$ & $a$ & $2a$ & $3a$ & 0 \\
\hline
$2a$ & $2a$ & $3a$ & 0 & $a$ \\
\hline
$3a$ & $3a$ & 0 & $a$ & $2a$ \\
\hline
\end{tabular} \qquad
\begin{tabular}{ c | c | c | c | c | c |}
+ & 0 & $a$ & $b$ & $c$ \\
\hline
0 & 0 & $a$ & $b$ & $c$ \\
\hline
$a$ & $a$ & 0 & $c$ & $b$ \\
\hline
$b$ & $b$ & $c$ & 0 & $a$ \\
\hline
$c$ & $c$ & $b$ & $a$ & 0 \\
\hline
\end{tabular}
\caption{Addition tables of the ring of order 4; char 4 and char 2}
\end{center}
\end{table}

\begin{prop} The following propositions hold.
\begin{enumerate}
\item $A \cong \mathbb{Z}_4$, $D \cong (\mathbb{Z}_2 + \mathbb{Z}_2) \cong \mathbb{Z}_2[u]/(u^2-u)$, $G \cong \mathbb{Z}_2[u]/(u^2-1) \cong \mathbb{Z}_2[u]/(u^2)$ and $K \cong GF(4)$. These rings have a multiplicative identity. The other rings do not have a multiplicative identity.
\item In the above rings, only $E$ and $F$ are non-commutative rings.
\item Any product of two elements in $C$ or $J$ is zero.
\end{enumerate}
\end{prop}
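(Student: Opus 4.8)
The plan is to dispatch the three parts in turn. In each case I would first read the additive group of the ring off Table~2 --- it is $\mathbb{Z}_4$ for the three characteristic-$4$ rings $A,B,C$ and the Klein four-group $\mathbb{Z}_2\times\mathbb{Z}_2$ for the characteristic-$2$ rings --- and then invoke the multiplicative relations from Table~1. The point that makes everything routine is that, once the additive group is known to have order $4$, every product of two elements already lies in $\{0,a,2a,3a\}$ or in $\{0,a,b,a+b\}$, so the full multiplication table is pinned down by distributivity from the displayed relations. The only presentation that leaves a product formally unspecified is $J$; there associativity together with $|J|=4$ forces it (for instance, if $ab=b$ then $b=a(ab)=(a^2)b=0$, and the options $ab\in\{a,a+b\}$ are excluded the same way), so $ab=ba=0$ and $J$ is the null ring on $\mathbb{Z}_2\times\mathbb{Z}_2$.

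For part~(1), I would establish the four isomorphisms by naming the images of the generators and checking the defining relations; since a ring of order $4$ is determined by its addition table together with the products of a generating set, such a check is conclusive. Explicitly: $a\mapsto 1$ gives $A\cong\mathbb{Z}_4$ (the relation $a^2=a$ becomes $1^2=1$); $a\mapsto(1,0)$, $b\mapsto(0,1)$ gives $D\cong\mathbb{Z}_2\times\mathbb{Z}_2$, and $D\cong\mathbb{Z}_2[u]/(u^2-u)$ then follows from the Chinese Remainder Theorem applied to the coprime factorisation $u^2-u=u(u-1)$; $a\mapsto u$, $b\mapsto 1$ gives $G\cong\mathbb{Z}_2[u]/(u^2)$, while $\mathbb{Z}_2[u]/(u^2-1)\cong\mathbb{Z}_2[u]/(u^2)$ because $u^2-1=(u+1)^2$ in characteristic $2$; and $a\mapsto 1$, $b\mapsto\omega$ with $\omega^2+\omega+1=0$ gives $K\cong GF(4)$, as the relation $b^2=a+b$ reads $\omega^2=1+\omega$. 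These same computations exhibit the identity elements: $a$ in $A$, $a+b$ in $D$, $b$ in $G$, $a$ in $K$. For the converse, I would show that each of $B,C,E,F,H,I,J$ has no multiplicative identity. For five of them this is immediate from a structural observation: in $B$ every product lies in $\{0,2a\}$, in $C$ and in $J$ every product is $0$, in $I$ every product lies in $\{0,b\}$, and in $H$ the element $a$ is annihilated on both sides by every ring element; in each of these cases the nonzero element $a$ is never of the form $e\cdot a$, so a two-sided identity cannot exist. In $E$ and $F$ the multiplication does take every value, so there I would simply run the finite check over the three nonzero candidates: in $E$ the identities $ab=a\neq b$, $ba=b\neq a$, $(a+b)a=a+b\neq a$ rule out $a$, $b$, $a+b$ respectively; in $F$ the identities $ba=a\neq b$, $ab=b\neq a$, $a(a+b)=a+b\neq a$ do the same.

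Part~(2) follows quickly. The rings $A$, $B$, $C$ and $I$ are commutative because each is generated as a ring by the single element $a$ (in $I$ one has $b=a^2$), and a ring generated by one element is commutative; $D$, $G$, $H$, $K$ are commutative because their presentations impose $ab=ba$ outright; and $J$ is commutative since all products vanish. Noncommutativity of $E$ and $F$ is witnessed directly by $ab=a\neq b=ba$ in $E$ and $ab=b\neq a=ba$ in $F$. Part~(3) is just the remark already used: a general element of $C$ is $ka$, so $(ka)(\ell a)=k\ell\,a^2=0$; and in $J$ the generators satisfy $a^2=b^2=ab=ba=0$, so by distributivity every product of two elements is $0$.

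The one place that calls for any care is the reading of the presentations: one must remember that it is the order-$4$ constraint --- equivalently, the addition tables --- that fixes the otherwise under-determined products, most conspicuously for $J$. Beyond that I anticipate no real obstacle; the whole argument is a finite list of short verifications, and the only mildly delicate point is arranging the no-identity step so as to avoid brute-forcing all candidates in all seven rings, which the structural observations above already handle for five of them.
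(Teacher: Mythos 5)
Your proposal is correct and follows essentially the same route as the paper: the same explicit generator-image isomorphisms for $A$, $D$, $G$, $K$ (your map $a\mapsto u$ into $\mathbb{Z}_2[u]/(u^2)$ matches the paper's $a\mapsto 1+u$ into $\mathbb{Z}_2[u]/(u^2-1)$ after the substitution $u\mapsto u+1$), together with direct inspection of the multiplication tables for commutativity and zero products. You go further than the paper in two harmless ways --- verifying the absence of an identity in $B,C,E,F,H,I,J$ and deriving $ab=ba=0$ in $J$ from associativity --- both of which the paper leaves as ``clear from Table 1'' or ``easily checked,'' and both of your arguments are sound.
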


\begin{proof}
By Table 1, it is clear that (ii) and (iii) hold. So it remains to show that the isomorphisms exist.

Define the following homomorphisms:

$\phi_A : A \to \mathbb{Z}_4$ by $\phi_A(a)=1$

$\phi_D : D \to \mathbb{Z}_2 + \mathbb{Z}_2$ by $\phi_D(a)=(1,0)$ and $\phi_D(b)=(0,1)$

$\phi_G : G \to \mathbb{Z}/(u^2-1)$ by $\phi_G(a)=1+u$ and $\phi_D(b)=1$

$\phi_K : K \to GF(4)= \{0,1,w,1+w\}$ by $\phi_K(a)=1$ and $\phi_D(b)=w$

Then we can easily check the homomorphisms are isomorphisms.
\end{proof}

As mentioned above, most DNA codes are constructed using $A \cong \mathbb{Z}_4$ or $K \cong GF(4)$. J. Liand and L. Wang constructed the cyclic DNA codes, using the ring $\mathbb{F}_2+u\mathbb{F}_2$ with $u^2=0$~\cite{Liang}. This ring $\mathbb{F}_2+u\mathbb{F}_2 \cong G$. K. Guenda and T. Gulliver constructed DNA codes over the same ring $\mathbb{F}_2+u\mathbb{F}_2$ with $u^2=0$~\cite{Guenda}. I. Siap et al. used the ring $\mathbb{F}_2[u]/(u^2-1) \cong G$~\cite{Siap}. N. Bennenni et al introduced another cyclic DNA codes over the ring $\mathbb{F}_2+v\mathbb{F}_2$ with $v^2=v$~\cite{Bennenni}. This ring $\mathbb{F}_2+v\mathbb{F}_2 \cong D$. Even though there are some papers algebraic codes over the rings $E$, $H$ and $I$, the DNA codes over those rings have not been constructed. Thus we focus on the other rings.

Before construct DNA codes over rings, we need to define $maps$ which can calculate easily the complement and the $GC$-content. It means that the complement map and the $GC$-content map should be defined over finite rings.
\begin{defn}{\rm
Let $R$ be a ring of order 4 and $f:\{A,T,C,G\} \to R$ be a proper representation map. It means $f$ is bijective. A {\em complement map} $\phi$ over $R$ is a bijection defined by $\phi(f(x)) = f(x^C)$.
}
\end{defn}

We can check that $\phi(x) \ne x$ and $\phi^2(x) = x$ since $x^C \ne x$ and ${(x^C)}^C = x$. We denote $\phi$ by $\phi(x)=x^C$. This $\phi$ is a bijection on R, so we can define this map $\phi$ easily. The question is whether a simple definition of $\phi$ exist. To be specific we want to define an element $\alpha \in R$ satisfying $x^C = x+\alpha$. By addition table of rings, we can find such $\alpha$ in any finite ring of order 4. If the finite ring has char 4, define $x^C=x+2a$. If the finite ring has char 2, any element of $R$ can be $\alpha$.

Let $\mathcal{C}$ be an additive code over $R$ which is a ring of order 4 and let $\mathrm{x} = (\alpha \alpha \cdots \alpha)$. We can calculate $\mathrm{y}^C$ by $\mathrm{y}^C = \mathrm{y} + \mathrm{x}$. If the codeword $\mathrm{x} \in \mathcal{C}$, then the code $\mathcal{C}^C = \{ \mathrm{c}^C \, | \, \mathrm{c} \in \mathcal{C} \}$ is the same as the original code $\mathcal{C}$. So it implies that the reverse-complement constraint and the reverse constraint are equal in $\mathcal{C}$.

\begin{defn}{\rm
Let $R$ be a ring of order 4. A {\em $GC$-content map} $\psi:R \to GF(2)$ is a function defined by
\begin{equation*}
\psi(x) = \begin{cases}
1, & \mbox{if } f^{-1}(x) = C \mbox{ or } G \\
0, & \mbox{otherwise}
\end{cases}
\end{equation*}
where $f:\{A,T,C,G\} \to R$ is a bijection.
}
\end{defn}

The map $\psi$ can be defined $\psi:R \to A$ where $A = \{0,r\}$ and $r(\ne 0) \in R$. This definition can be extended to $\psi:R \to A \xhookrightarrow{} GF(2)$. For the $GC$-content map we also want to define simply as $\psi(x) = \beta x$ for some $\beta \in R$.

\begin{prop}\label{prop-GC}
We can define the natural $GC$-content map over finite ring of order 4, except the ring $C, J$ and $K$.
\end{prop}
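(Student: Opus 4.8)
The plan is to first pin down what a \emph{natural} $GC$-content map is. By the discussion preceding the statement, it is a map of the form $\psi(x) = \beta x$ for a fixed $\beta \in R$ --- in the two noncommutative cases we should also allow the right-multiplication map $\psi(x) = x\beta$ --- whose image is a two-element subset $\{0,r\}$ of $R$; composing with the embedding $\{0,r\} \hookrightarrow GF(2)$ and relabeling $\{A,T,C,G\}$ by a suitable bijection $f$ then reproduces the map of the Definition. Since $x \mapsto \beta x$ is an additive endomorphism of $(R,+)$, its image is a subgroup, hence of order $1$, $2$, or $4$; the order equals $2$ exactly when both fibers of $\psi$ have size $2$, which is precisely what lets $R$ be split $2$--$2$ into an ``$\{A,T\}$ part'' and a ``$\{C,G\}$ part''. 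Thus the proposition reduces to showing that an element $\beta$ with $|\beta R| = 2$ (or $|R\beta| = 2$) exists if and only if $R \notin \{C, J, K\}$.

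For the three exceptional rings, non-existence can be argued directly. For $C$ and $J$, the earlier proposition guarantees that every product of two ring elements is $0$, so $\psi(x) = \beta x$ is the zero map for every $\beta$ and $|\psi(R)| = 1$. For $K \cong GF(4)$, which is a field, $x \mapsto \beta x$ is the zero map when $\beta = 0$ and a bijection when $\beta \neq 0$ (as $\beta$ is then a unit), so $|\psi(R)| \in \{1,4\}$; in none of these cases does the image have order $2$, so no natural $GC$-content map exists.

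For the eight remaining rings I would exhibit $\beta$ and read the image off Tables 1 and 2, ring by ring: $\beta = 2a$ in $A \cong \mathbb{Z}_4$ gives image $\{0,2a\}$; $\beta = a$ in $B$ gives $\{0,2a\}$ (using $a^2 = 2a$, $4a = 0$); $\beta = a$ in $D \cong \mathbb{Z}_2+\mathbb{Z}_2$ gives $\{0,a\}$; and $\beta = a$ in $G$, $\beta = b$ in $H$, $\beta = a$ in $I$ each give a two-element image after a one-line check (for instance in $G$ one has $a\cdot 0 = a\cdot a = 0$ and $a\cdot b = a\cdot(a+b) = a$, so the image is $\{0,a\}$). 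For the noncommutative ring $E$, take $\beta = a$ acting on the left: from $a^2 = a$, $ab = a$ and $2a = 0$ one gets $a\cdot(a+b) = 2a = 0$, so the image is $\{0,a\}$. For $F$, left multiplication never gives a two-element image ($aF = bF = F$ while $(a+b)F = \{0\}$, using $ab = b$, $ba = a$), so one uses right multiplication instead: $x \mapsto xa$ has image $\{0,a\}$ since $aa = a$, $ba = a$ and $(a+b)a = 2a = 0$.

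The only delicate point, which I would flag, is the noncommutative pair $E$ and $F$: no single-sided convention works uniformly, so the statement must be read as allowing either a left- or a right-multiplication map; everything else is a finite verification against the addition and multiplication tables. I would also record the (non-essential) remark that if one additionally wants $\psi$ constant on complement pairs --- i.e.\ compatible with a complement map $x \mapsto x+\alpha$ --- one needs $\beta\alpha = 0$ (with the product taken in the order matching left or right multiplication), and a quick check yields a suitable nonzero $\alpha$ in each of the eight admissible rings (e.g.\ $\alpha = 2a$ for $A$ and $B$; $\alpha = b$ for $D$; $\alpha = a+b$ for $E$ and $F$; $\alpha = a$ for $G$; $\alpha = a$ for $H$; $\alpha = b$ for $I$), so the complement-compatible version of the map exists as well.
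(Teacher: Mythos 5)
Your proposal is correct and follows essentially the same route as the paper: a ring-by-ring exhibition of $\beta$ (with right multiplication substituted for left in the ring $F$), together with the zero-product argument for $C$ and $J$ and the unit argument for the field $K$; your choices of $\beta$ all appear among the paper's own. The extra observations you add --- that $x\mapsto\beta x$ is an additive endomorphism so its image has order $1$, $2$, or $4$, and the complement-compatibility condition $\beta\alpha=0$ matching the paper's listed values of $0^C$ --- are harmless refinements rather than a different argument.
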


\begin{proof}
Note that this $\beta$ satisfies $\beta x = \beta y$ for some $x \ne y$. If we define $\psi(x) = \beta x$ as follows
\begin{equation*}
\beta = \begin{cases}
2a, & \mbox{ring $A$, and } 0^C = 2a \\
a \mbox{ or } 3a, & \mbox{ring $B$, and } 0^C = 2a \\
a, & \mbox{ring $D$, and } 0^C = b \\
a \mbox{ or } b \mbox{ or } c, & \mbox{ring $E$, and } 0^C = c \\
a, & \mbox{ring $G$, and } 0^C = a \\
b \mbox{ or } c, & \mbox{ring $H$, and } 0^C = a \\
a \mbox{ or } c, & \mbox{ring $I$, and } 0^C = b \\
\end{cases}
\end{equation*}
then the map $\psi$ is well-defined.

In the ring $F$, there is no element $\beta$ satisfying $\beta x = \beta y$ for some $x \ne y$. However, if we define $\psi(x) = x \beta$ where $\beta = a \mbox{ or } b \mbox{ or } c$ and $0^C = c$, then this $\psi$ is well-defined.

Since the rings $C$ and $J$ satisfy $x \cdot y = 0$ for all $x,y$, so the element $\beta$ and the map $\psi$ satisfying $\psi(x) = \beta x$ do not exist.

The ring $K$ is a field, so there is no element $\beta$ satisfying $\beta x = \beta y$ for some $x \ne y$, except zero.

Therefore the ring $C$, $J$ and $K$ do not have the natural $GC$-content map.
\end{proof}

The complement map $\phi$ and the $GC$-content $\psi$ can be defined on a DNA code $\mathcal{C}(n,M)$ and its codeword $\mathrm{x} = (x_1 \cdots x_n)$ by $\overline{\phi}(\mathrm{x}) = (\phi(x_1) \cdots \phi(x_n))$ and $\overline{\psi}(\mathrm{x}) = (\psi(x_1) \cdots \psi(x_n))$. Therefore $\overline{\phi}(\mathcal{C})$ is another $DNA$ code and $\overline{\psi}(\mathcal{C})$ is a binary code. From now on let $\phi$ and $\psi$ be the map on a code. Note that $d_H(\psi(\mathrm{x}))$ is the number of $GC$s in the codeword $\mathrm{x}$. So it is natural that $\psi$ is called the $GC$-content map.

By Proposition~\ref{prop-GC}, many four element rings have a natural $GC$-content map. However, the Galois field $GF(4)$ does not have, even though it is widely used. Therefore is natural to concentrate on other rings. There has been some attempt to use rings $A \cong \mathbb{Z}_4$, $D \cong \mathbb{Z}_2[u]/(u^2-u)$ and $G \cong \mathbb{Z}_2[u]/(u^2-1)$ in DNA codes. The other rings which have natural $GC$-content map did not get noticed. In particular, we construct DNA codes over the ring $E$.

\section{Quasi self-dual codes over $E$}

Following Alhmadi et al., a quasi self-dual code over the ring E is defined~\cite{Alahmadi1}. Recall the ring $E$ is defined by two generators $a$ and $b$ with the relation as follows.
\begin{center}
$E = \left\langle a,b \, | \, 2a=2b=0, a^2=a, b^2=b, ab=a, ba=b \right\rangle$.
\end{center}
Its multiplication table is given as follows.
\begin{table}[h]
\begin{center}
\begin{tabular}{ c | c | c | c | c | c |}
$\times$ & 0 & $a$ & $b$ & $c$ \\
\hline
0 & 0 & 0 & 0 & 0 \\
\hline
$a$ & 0 & $a$ & $a$ & 0 \\
\hline
$b$ & 0 & $b$ & $b$ & 0 \\
\hline
$c$ & 0 & $c$ & $c$ & 0 \\
\hline
\end{tabular}
\caption{Multiplication table of the ring of the ring $E$}
\end{center}
\end{table}

Since the ring $E$ is noncommutative, we first should define a linear $E$-code. A linear $E$-code is a one-sided $E$-submodule of $E^n$. Define an inner product on $E^n$ as $(x,y) = \sum_{i=1}^n x_i y_i$ where $x,y \in E^n$. (The product is the multiplication on $E$.)
\begin{defn}[\cite{Alahmadi1}]
{\rm
Let $\mathcal{C}$ be a linear $E$-code.
\begin{enumerate}
\item The {\em right dual} $\mathcal{C}^{\perp_R}$ of $\mathcal{C}$ is the right module $\mathcal{C}^{\perp_R} = \{ y \in E^n  \, | \, \forall x \in \mathcal{C}, (x,y) = 0\}$.

\item The {\em left dual} $\mathcal{C}^{\perp_L}$ of $\mathcal{C}$ is the left module $\mathcal{C}^{\perp_L} = \{ y \in E^n  \, | \, \forall x \in \mathcal{C}, (y,x) = 0\}$.

\item The code $\mathcal{C}$ is {\em left self-dual} (resp. {\em right self-dual}) if $\mathcal{C} = \mathcal{C}^{\perp_L}$ (resp. $\mathcal{C} = \mathcal{C}^{\perp_R}$). The code $\mathcal{C}$ is self-dual if it is both left and right self-dual.

\item The code $\mathcal{C}$ is {\em self-orthogonal} if $\forall x,y \in \mathcal{C}, (x,y) = 0$. A {\em quasi self-dual (QSD) code} is a self-orthogonal code of size $2^n$.
\end{enumerate}
}
\end{defn}

It is local with maximal ideal $J=\{0,c\}$, and its residue field $E/J \cong GF(2)$. Thus for any element $e\in E$, we can write
\begin{equation*}
e = as+ct
\end{equation*}
where $s,t \in \{0,1\} = GF(2)$ and where a natural action of $GF(2)$ on $E$. Denote by $r : E \to E/J \cong GF(2)$, the map of reduction modulo $J$. Thus $r(0)=r(c)=0$ and $r(a)=r(b)=1$. Then this map $r$ can be the $GC$-content map $\psi$. Let define $f : \{A,T,G,C\} \to E$ by

\[f(A)=0, f(T)=c, f(G)=a {\mbox{ and }} f(C)=b.\] Then $r(f(C))=r(f(G))=1$, and the others go to 0. We can check $\psi(x) = ax$ satisfies $0^C=c$ and $\psi(0)=\psi(c)=0$, $\psi(a)=\psi(b)=a$. Moreover $Im(\psi) = \{ 0,a \} \cong GF(2)$. Therefore $\psi \cong r$. So now let $\psi=r$. Then it can be extended from $E^n$ to ${GF(2)}^n$. And since pairing is given by $\{A,T\} \to \{0,c\}$ and $\{G,C\} \to \{a,b\}$, we should define $x^C = x+c$.

\begin{defn}[\cite{Alahmadi1}]
{\rm
Let $\mathcal{\mathcal{C}}$ be a code of length $n$ over $E$.
\begin{enumerate}
\item The {\em residue code} of $\mathcal{C}$ is $res(\mathcal{C}) = \{ \psi(y) \in {GF(2)}^n \, | \, y \in \mathcal{C}\}$.

\item The {\em torsion code} of $\mathcal{C}$ is $tor(\mathcal{C}) = \{ x \in {GF(2)}^n \, | \, cx \in \mathcal{C}\}$.
\end{enumerate}
}
\end{defn}

The both codes are binary code.

\begin{theorem}[\cite{Alahmadi1}]\label{Thm-3}
If $\mathcal{C}$ is a QSD code over $E$, then $C = a\, res(\mathcal{C}) \oplus c\, tor(\mathcal{C})$ as modules.
\end{theorem}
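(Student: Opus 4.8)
The plan is to use the local structure of $E$ to split every codeword of $\mathcal{C}$ into an ``$a$-part'' living over $\mathrm{res}(\mathcal{C})$ and a ``$c$-part'' living over $\mathrm{tor}(\mathcal{C})$, to check that each part again lies in $\mathcal{C}$, and finally to match cardinalities. Throughout I take a linear $E$-code to be a left $E$-submodule of $E^{n}$, as in~\cite{Alahmadi1}; this is the only place the one-sided convention matters, see below.

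First I would record the ring arithmetic needed. As already noted, $E$ is local with $J=\{0,c\}$ and $E/J\cong GF(2)$, so each $e\in E$ is uniquely $e=as+ct$ with $s,t\in\{0,1\}$ and $r(e)=s$; coordinatewise this writes every $y\in E^{n}$ uniquely as $y=a\mathbf{s}+c\mathbf{t}$ with $\mathbf{s},\mathbf{t}\in GF(2)^{n}$ and $\mathbf{s}=\psi(y)$. The one computation that does the work is that left multiplication by $a$ extracts the residue part: from $a^{2}=a$, $ac=0$, $ab=a$ one gets $a(as_{i}+ct_{i})=as_{i}$ for every $i$, i.e. $ay=a\,\psi(y)$ for all $y\in E^{n}$. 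I would also note the two maps $\mathbf{s}\mapsto a\mathbf{s}$ and $\mathbf{t}\mapsto c\mathbf{t}$ from $GF(2)^{n}$ to $E^{n}$ are injective (apply $r$ to the first; for the second, $ct_{i}=0$ iff $t_{i}=0$).

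Next come the two inclusions. That $c\,\mathrm{tor}(\mathcal{C})\subseteq\mathcal{C}$ is immediate from the definition of the torsion code. For $a\,\mathrm{res}(\mathcal{C})\subseteq\mathcal{C}$: since $\mathcal{C}$ is a left $E$-module, $a\mathcal{C}=\{ay:y\in\mathcal{C}\}\subseteq\mathcal{C}$, while the identity above gives $a\mathcal{C}=\{a\,\psi(y):y\in\mathcal{C}\}=a\,\psi(\mathcal{C})=a\,\mathrm{res}(\mathcal{C})$. The sum is direct: if $a\mathbf{s}=c\mathbf{t}$ then applying $r$ coordinatewise yields $\mathbf{s}=\mathbf{0}$, hence $a\mathbf{s}=\mathbf{0}$; so $a\,\mathrm{res}(\mathcal{C})\cap c\,\mathrm{tor}(\mathcal{C})=\{0\}$ and $a\,\mathrm{res}(\mathcal{C})\oplus c\,\mathrm{tor}(\mathcal{C})$ is an additive subgroup of $\mathcal{C}$.

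Finally I would show this subgroup exhausts $\mathcal{C}$ by counting. The reduction map $r\colon\mathcal{C}\to\mathrm{res}(\mathcal{C})$ is an additive surjection whose kernel is exactly $c\,\mathrm{tor}(\mathcal{C})$: a codeword with zero residue has the form $c\mathbf{t}$, and lying in $\mathcal{C}$ forces $\mathbf{t}\in\mathrm{tor}(\mathcal{C})$. Hence $|\mathcal{C}|=|\mathrm{res}(\mathcal{C})|\cdot|\mathrm{tor}(\mathcal{C})|$, which by the injectivity above equals $|a\,\mathrm{res}(\mathcal{C})\oplus c\,\mathrm{tor}(\mathcal{C})|$; combined with the inclusion of the previous step this forces equality, and for a QSD code the common value is $2^{n}$, i.e. $\dim\mathrm{res}(\mathcal{C})+\dim\mathrm{tor}(\mathcal{C})=n$. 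I expect the only subtle point to be precisely this use of $a\mathcal{C}\subseteq\mathcal{C}$: with the left-module convention it is free, whereas under a right-module convention one would instead have to lean on self-orthogonality together with the size $2^{n}$ of a QSD code to recover the same decomposition. Everything else is routine checking inside the four-element ring $E$.
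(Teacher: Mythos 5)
Your argument is correct as written, with one caveat about the convention it rests on; note also that this paper does not prove Theorem~\ref{Thm-3} at all -- it is imported from \cite{Alahmadi1} -- so there is no in-paper proof to compare against, and your write-up serves as a self-contained justification. Under the left-module reading of ``one-sided $E$-submodule'' every step checks out: $ay=a\,\psi(y)$, the two inclusions, directness via $r$, and the count $|\mathcal{C}|=|res(\mathcal{C})|\cdot|tor(\mathcal{C})|$. Observe, however, that your proof never invokes self-orthogonality or the size $2^n$, so it actually establishes the splitting for \emph{every} left-linear $E$-code; the fact that the theorem is stated only for QSD codes is a sign that the source argument runs through the other branch you flagged. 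That branch is not vacuous: right multiplication by $a$ or $b$ is the identity on $E$ and by $0$ or $c$ is the zero map, so \emph{every} additive subgroup of $E^n$ is a right $E$-submodule, and without the QSD hypothesis the conclusion genuinely fails there (e.g.\ $\mathcal{C}=\{0,b\}\subseteq E^1$ is right-linear of size $2^1$, yet $a\,res(\mathcal{C})=\{0,a\}\not\subseteq\mathcal{C}$ and $tor(\mathcal{C})=\{0\}$). To close it exactly as you predicted: writing $x=a\,s(x)+c\,t(x)$, one computes $(x,y)=a\langle s(x),s(y)\rangle+c\langle t(x),s(y)\rangle$, so self-orthogonality forces every torsion part $t(x)$ into ${res(\mathcal{C})}^{\perp}$; your counting argument together with $|\mathcal{C}|=2^n$ gives $\dim res(\mathcal{C})+\dim tor(\mathcal{C})=n$, hence $tor(\mathcal{C})={res(\mathcal{C})}^{\perp}$ (Theorem~\ref{Thm-4}(ii)), so $t(x)\in tor(\mathcal{C})$ and $a\,s(x)=x+c\,t(x)\in\mathcal{C}$, after which your directness-and-counting step finishes the proof verbatim. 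Adding that short paragraph makes your proof cover both one-sided conventions and explains why the QSD hypothesis appears in the statement; as it stands, it is a complete proof only under the left-module convention you declared.
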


\begin{theorem}[\cite{Alahmadi1}] \label{Thm-4}
For any QSD $E$-linear codes $\mathcal{C}$, we have
\begin{enumerate}
\item $res(\mathcal{C}) \subseteq {res(\mathcal{C})}^{\perp}$,

\item $tor(\mathcal{C}) = {res(\mathcal{C})}^{\perp}$,

\item $| \, \mathcal{C} \, | = dim(res(\mathcal{C})) + dim(tor(\mathcal{C}))$.
\end{enumerate}
\end{theorem}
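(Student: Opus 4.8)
The plan is to prove the three assertions in the order (iii), (i), (ii): the equality in (ii) will fall out of a one-sided containment together with a dimension count that uses (i) and (iii). Throughout I would use the facts recorded just before the statement: $r:E\to E/J\cong GF(2)$ is a (two-sided) ring homomorphism with $r(0)=r(c)=0$ and $r(a)=r(b)=1$; from Table~3 one has $ca=cb=c$ and $cc=ac=bc=0$; and $2c=0$ because $c=a+b$ and $2a=2b=0$. A preliminary observation I would make first is that $res(\mathcal C)$ and $tor(\mathcal C)$ are honest $GF(2)$-linear codes, which is immediate from the additive closure of $\mathcal C$ (for $res$, because $r$ is additive; for $tor$, because $cx,cx'\in\mathcal C$ implies $c(x+x')\in\mathcal C$).

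For (iii), I would look at the coordinatewise reduction map $r:\mathcal C\to res(\mathcal C)$, which is surjective by definition. Its kernel is the set of $x\in\mathcal C$ with all entries in $J=\{0,c\}$, i.e. the set of $x=cz$ with $z\in GF(2)^n$ and $cz\in\mathcal C$; this is exactly $c\,tor(\mathcal C)$, and since $c\neq 0$ the map $z\mapsto cz$ is injective, so $|\ker r|=|tor(\mathcal C)|=2^{\dim tor(\mathcal C)}$. Counting then gives $2^n=|\mathcal C|=|res(\mathcal C)|\cdot|\ker r|=2^{\dim res(\mathcal C)}\cdot 2^{\dim tor(\mathcal C)}$, so $\dim res(\mathcal C)+\dim tor(\mathcal C)=n$, which is the intended reading of (iii) (recall $|\mathcal C|=2^n$ for a QSD code). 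Theorem~\ref{Thm-3} gives the same count via the internal direct sum $\mathcal C=a\,res(\mathcal C)\oplus c\,tor(\mathcal C)$.

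For (i), take $\bar x,\bar y\in res(\mathcal C)$ and lift to $x,y\in\mathcal C$ with $r(x_i)=\bar x_i$, $r(y_i)=\bar y_i$. Applying $r$ to the self-orthogonality relation $\sum_{i=1}^n x_i y_i=0$ and using that $r$ is a ring homomorphism yields $\sum_{i=1}^n\bar x_i\bar y_i=0$ in $GF(2)$, i.e. $res(\mathcal C)$ is self-orthogonal, so $res(\mathcal C)\subseteq res(\mathcal C)^{\perp}$. For (ii), I would first prove $tor(\mathcal C)\subseteq res(\mathcal C)^{\perp}$: given $x\in tor(\mathcal C)$ (so $cx\in\mathcal C$) and $\bar y\in res(\mathcal C)$ with lift $y\in\mathcal C$, one checks case by case from Table~3 that $c\,y_i=c\cdot r(y_i)$, hence $(cx)_i\,y_i=c\,(x_i\bar y_i)$ with $x_i\bar y_i\in GF(2)$; because $2c=0$ the sum of these entries collapses to $\sum_i(cx)_i\,y_i=c\cdot\langle x,\bar y\rangle$. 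Self-orthogonality of $\mathcal C$ forces this to be $0$, and $c\neq 0$ then gives $\langle x,\bar y\rangle=0$, so $x\in res(\mathcal C)^{\perp}$. Finally, combining (i) and (iii) with standard $GF(2)$-duality, $\dim tor(\mathcal C)=n-\dim res(\mathcal C)=\dim res(\mathcal C)^{\perp}$, so the containment $tor(\mathcal C)\subseteq res(\mathcal C)^{\perp}$ is forced to be the equality in (ii).

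The work here is bookkeeping rather than conceptual, and the main place to be careful is the non-unital, noncommutative arithmetic of $E$ together with the several distinct ``scalar'' operations in play --- reduction $r$, left multiplication by $a$, left multiplication by $c$ --- so that identities such as $c\,y_i=c\cdot r(y_i)$ and the collapse $\underbrace{c+\cdots+c}_{k}=c\cdot(k\bmod 2)$ are applied consistently, and so that in each use of self-orthogonality both arguments are genuinely codewords of $\mathcal C$ (e.g. $cx\in\mathcal C$ in the computation for (ii)). It is worth noting that noncommutativity never actually bites, because self-orthogonality supplies $(u,v)=0$ for every \emph{ordered} pair $u,v\in\mathcal C$, so the order of each product may be chosen as convenient.
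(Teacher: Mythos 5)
Your proof is correct. Note that this theorem is stated in the paper without proof---it is quoted from \cite{Alahmadi1}---so there is no in-paper argument to compare against; your argument is self-contained and is the standard one: the coordinatewise reduction $r:\mathcal{C}\to res(\mathcal{C})$ with kernel identified with $c\,tor(\mathcal{C})$ gives the count $\dim res(\mathcal{C})+\dim tor(\mathcal{C})=n$ (the correct reading of (iii), since $|\mathcal{C}|=2^n$), applying the ring homomorphism $r$ to $\sum_i x_iy_i=0$ gives (i), and pairing $cx$ against a lift $y$, using $cy_i=c\cdot r(y_i)$, $2c=0$ and $c\ne 0$, gives $tor(\mathcal{C})\subseteq res(\mathcal{C})^{\perp}$, which is an equality by comparing dimensions. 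Two cosmetic remarks: the final step of (ii) uses only (iii) together with the containment (part (i) is not actually needed there), and it is worth stating explicitly, as you implicitly do, that $J=\{0,c\}$ is a two-sided ideal so that $r$ is indeed multiplicative as well as additive.
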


We can construct QSD $E$-codes by the above theorems.

\begin{theorem}[\cite{Alahmadi1}]
Let $\mathcal B$ be a self-orthogonal binary $\left[n,k_1\right]$ code. The code $\mathcal{C}$ over the ring $E$ defined by the relation
\begin{equation*}
\mathcal{C}=a \mathcal B \oplus c \mathcal B^{\perp}
\end{equation*}
is a QSD code. Its residue code is $\mathcal B$ and its torsion code is $\mathcal B^{\perp}$.
\end{theorem}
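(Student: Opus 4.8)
The plan is to verify the three assertions of the theorem directly: that $\mathcal{C} = a\mathcal{B} \oplus c\mathcal{B}^{\perp}$ is self-orthogonal, that it has size $2^n$, and that its residue and torsion codes are as claimed. First I would set up notation for the module structure. Since $\mathcal{B} \subseteq \mathbb{F}_2^n$ has dimension $k_1$, the code $a\mathcal{B} = \{a\mathbf{x} \mid \mathbf{x} \in \mathcal{B}\}$ (coordinatewise scalar action, using $a\cdot 1 = a$, $a \cdot 0 = 0$) is an $E$-submodule of $E^n$; likewise $c\mathcal{B}^{\perp}$. A general element of $\mathcal{C}$ is $a\mathbf{x} + c\mathbf{y}$ with $\mathbf{x} \in \mathcal{B}$, $\mathbf{y} \in \mathcal{B}^{\perp}$, and since $a + c = b$ in the char-2 addition table together with $a\cdot 1 + c \cdot 0$-type bookkeeping, the sum $a\mathbf{x} + c\mathbf{y}$ has in each coordinate one of $0, a, c, b$. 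One should note the direct-sum claim requires $a\mathcal{B} \cap c\mathcal{B}^{\perp} = \{0\}$, which is immediate because $r$ kills $c\mathcal{B}^{\perp}$ but is injective on $a\mathcal{B}$.

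Next I would compute the inner product. For $a\mathbf{x} + c\mathbf{y}$ and $a\mathbf{x}' + c\mathbf{y}'$ in $\mathcal{C}$, expand $(a\mathbf{x}+c\mathbf{y}, a\mathbf{x}'+c\mathbf{y}') = \sum_i (ax_i + cy_i)(ax_i' + cy_i')$ using the multiplication table of $E$ (Table~3). The key arithmetic facts are $a\cdot a = a$, $c \cdot a = c$, $a \cdot c = 0$, $c\cdot c = 0$, and $a\cdot b = a$ etc.; effectively, for $e, e' \in E$ one has $e e' = r(e')\cdot e$ when $e' \in \{a,b\}$ and $ee' = 0$ when $e' \in \{0,c\}$. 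Carrying this through, every term involving $\mathbf{y}'$ in the right slot dies, and $(aX + cY)\cdot aX' = (X\cdot X')\,a$ style contributions collapse so that the total reduces to $a\bigl(\sum_i x_i x_i'\bigr) + c\bigl(\sum_i y_i x_i'\bigr)$, i.e. a combination of the binary inner products $\langle \mathbf{x},\mathbf{x}'\rangle$ and $\langle \mathbf{y},\mathbf{x}'\rangle$. Since $\mathcal{B}$ is self-orthogonal, $\langle \mathbf{x},\mathbf{x}'\rangle = 0$; since $\mathbf{y} \in \mathcal{B}^{\perp}$ and $\mathbf{x}' \in \mathcal{B}$, $\langle \mathbf{y},\mathbf{x}'\rangle = 0$. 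Hence the inner product vanishes and $\mathcal{C}$ is self-orthogonal. This coordinatewise bookkeeping in the noncommutative ring $E$ — keeping the left/right order straight so that the surviving sums are exactly the ones annihilated by self-orthogonality of $\mathcal{B}$ and by $\mathbf{y}\perp\mathcal{B}$ — is the step I expect to be the main obstacle, or at least the place where care is needed.

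For the size, $|\mathcal{C}| = |a\mathcal{B}|\cdot|c\mathcal{B}^{\perp}|$ by the direct sum; since the scalar actions $\mathbf{x}\mapsto a\mathbf{x}$ and $\mathbf{y}\mapsto c\mathbf{y}$ are injective on $\mathbb{F}_2^n$, this is $2^{k_1}\cdot 2^{n-k_1} = 2^n$. Combined with self-orthogonality, $\mathcal{C}$ is QSD by definition. Finally, for the residue and torsion codes: $r$ sends $a\mathbf{x} + c\mathbf{y}$ to $\mathbf{x}$ (as $r(a)=1$, $r(c)=0$), so $res(\mathcal{C}) = \mathcal{B}$; and $c\mathbf{z} \in \mathcal{C}$ iff $c\mathbf{z} = a\mathbf{x} + c\mathbf{y}$ for some $\mathbf{x}\in\mathcal{B}$, $\mathbf{y}\in\mathcal{B}^{\perp}$, which (applying $r$) forces $\mathbf{x}=0$ and then $\mathbf{z}=\mathbf{y}\in\mathcal{B}^{\perp}$, so $tor(\mathcal{C}) = \mathcal{B}^{\perp}$. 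As a consistency check one can observe this is the decomposition of Theorem~\ref{Thm-3} and that the dimension count $k_1 + (n-k_1) = n$ matches Theorem~\ref{Thm-4}(iii).
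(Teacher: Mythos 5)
Your proof is correct. Note that this paper gives no proof of the statement at all---it is quoted from \cite{Alahmadi1}---so there is no in-paper argument to compare against; your direct verification (the coordinatewise rule $ee'=r(e')e$ in $E$, so the inner product of $a\mathbf{x}+c\mathbf{y}$ and $a\mathbf{x}'+c\mathbf{y}'$ collapses to $a\langle\mathbf{x},\mathbf{x}'\rangle+c\langle\mathbf{y},\mathbf{x}'\rangle$, which vanishes by self-orthogonality of $\mathcal{B}$ and by $\mathbf{y}\in\mathcal{B}^{\perp}$; the size count $2^{k_1}\cdot 2^{n-k_1}=2^n$; and the computation of $res(\mathcal{C})$ and $tor(\mathcal{C})$ via $r$ and injectivity of $\mathbf{z}\mapsto c\mathbf{z}$) is the standard argument and is sound. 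The only point worth stating more precisely is the module claim: $a\mathcal{B}$ is a \emph{right} $E$-submodule but not a left one (e.g.\ $c\cdot a\mathbf{x}=c\mathbf{x}$), which is enough since linear $E$-codes are one-sided submodules; if you want two-sided closure you should verify it for $\mathcal{C}$ as a whole, where it holds precisely because $\mathcal{B}\subseteq\mathcal{B}^{\perp}$.
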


By above theorem, we know that the classification of QSD $E$-codes is equivalent to the classification of their residue codes. Moreover, two QSD codes $\mathcal{C}_1$ and $\mathcal{C}_2$ are equivalent up to permutation if and only if their residue codes are equivalent up to permutation. Therefore we can get the following theorem.

\begin{cor}
Let $N(n,k_1)$ be the number of inequivalent QSD codes over $E$ where $n$ is the length and $k_1$ is the dimension of their residue codes. Then
\begin{equation*}
N(n,k_1) = \Psi(n,k_1)
\end{equation*}
where $\Psi(n,k_1)$ is the number of inequivalent binary self-orthogonal codes.
\end{cor}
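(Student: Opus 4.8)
The plan is to establish the claimed identity $N(n,k_1)=\Psi(n,k_1)$ by exhibiting a bijection between the set of equivalence classes of QSD codes over $E$ whose residue code has dimension $k_1$ and the set of equivalence classes of binary self-orthogonal $[n,k_1]$ codes. The earlier theorems do almost all the work, so the proof is essentially bookkeeping: I would first recall that, by the construction theorem, every self-orthogonal binary $[n,k_1]$ code $\mathcal{B}$ yields a QSD code $\mathcal{C}=a\mathcal{B}\oplus c\mathcal{B}^{\perp}$ over $E$ with $res(\mathcal{C})=\mathcal{B}$, and conversely, by Theorem~\ref{Thm-3} and Theorem~\ref{Thm-4}, every QSD code $\mathcal{C}$ over $E$ arises this way from $\mathcal{B}=res(\mathcal{C})$, which is self-orthogonal with $\dim(res(\mathcal{C}))=k_1$. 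Hence the assignment $\mathcal{C}\mapsto res(\mathcal{C})$ and its inverse $\mathcal{B}\mapsto a\mathcal{B}\oplus c\mathcal{B}^{\perp}$ are mutually inverse maps between QSD $E$-codes (of residue dimension $k_1$) and self-orthogonal binary $[n,k_1]$ codes.

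Next I would check that this correspondence is compatible with permutation equivalence, which is the only subtle point. If $\sigma\in S_n$ is a coordinate permutation, then $\sigma$ acts on $E^n$ coordinatewise and commutes with the reduction map $\psi=r$ applied coordinatewise, so $res(\sigma(\mathcal{C}))=\sigma(res(\mathcal{C}))$; likewise $\sigma$ commutes with scalar multiplication by $a$ and by $c$ and with taking binary duals, so $\sigma(a\mathcal{B}\oplus c\mathcal{B}^{\perp})=a\,\sigma(\mathcal{B})\oplus c\,\sigma(\mathcal{B})^{\perp}$. Consequently $\mathcal{C}_1$ and $\mathcal{C}_2$ are permutation-equivalent if and only if $res(\mathcal{C}_1)$ and $res(\mathcal{C}_2)$ are permutation-equivalent — one direction is immediate from the displayed identity, and the other follows because the residue code determines the QSD code. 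This descends the bijection on codes to a bijection on equivalence classes, which is exactly $N(n,k_1)=\Psi(n,k_1)$.

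I expect the main obstacle to be making the equivalence-compatibility argument airtight rather than any deep computation: one must be careful that the notion of equivalence used for $E$-codes (permutation equivalence, as emphasized in the paper's discussion of the reverse map) matches the notion used to count $\Psi(n,k_1)$, and that monomial scalings are not allowed on the $E$ side in a way that would collapse more classes — but since the only units relevant here come from the structure $\mathcal{C}=a\,res(\mathcal{C})\oplus c\,tor(\mathcal{C})$ with $tor(\mathcal{C})=res(\mathcal{C})^{\perp}$ forced, there is no extra freedom and the correspondence is clean. I would therefore present the proof as: (1) recall the two construction/decomposition theorems to get the set-theoretic bijection $\mathcal{C}\leftrightarrow res(\mathcal{C})$; (2) verify $\sigma$ commutes with $res(\cdot)$ and with $\mathcal{B}\mapsto a\mathcal{B}\oplus c\mathcal{B}^{\perp}$; (3) conclude the induced bijection on permutation-equivalence classes gives $N(n,k_1)=\Psi(n,k_1)$.
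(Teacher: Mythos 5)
Your proposal is correct and follows essentially the same route as the paper: both use the decomposition $\mathcal{C}=a\,res(\mathcal{C})\oplus c\,res(\mathcal{C})^{\perp}$ (Theorems~\ref{Thm-3} and~\ref{Thm-4}) together with the construction $\mathcal{B}\mapsto a\mathcal{B}\oplus c\mathcal{B}^{\perp}$ to get a bijection, and then match permutation-equivalence classes on both sides. Your write-up is in fact slightly more explicit than the paper's about why coordinate permutations commute with $res(\cdot)$ and with the construction, a point the paper only asserts.
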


\begin{proof}

(i) Let $\mathcal B_1$ and $\mathcal B_2$ be self-orthogonal binary $[n,k_1]$ codes and let $\mathcal{C}_1$ and $\mathcal{C}_2$ be codes over the ring $E$ defined by $\mathcal{C}_i=a \mathcal B_i\oplus c \mathcal B_i^\perp, i=1,2$.

By theorem 5, if $\mathcal B_1 \cong \mathcal B_2$ then the code $a \mathcal B_1\oplus c \mathcal B_1^\perp \cong a \mathcal B_2\oplus c \mathcal B_2^\perp$ so $\mathcal{C}_1 \cong \mathcal{C}_2$.

(ii)
Suppose $\mathcal{C}_1 \cong \mathcal{C}_2$. Then by Theorems~\ref{Thm-3} and~\ref{Thm-4}, $a \mathcal B_1\oplus c \mathcal B_1^\perp = \mathcal C_1$  for some binary code $\mathcal B_1$ and $a \mathcal B_2\oplus c \mathcal B_2^\perp =\mathcal C_2$ for some binary code $\mathcal B_2$. Therefore, $a\mathcal B_1 \cong a\mathcal B_2$, that is, $\mathcal B_1 \cong \mathcal B_2$.

Therefore if there exist $\Psi(n,k_1)$ inequivalent self-orthogonal binary $[n,k_1]$ codes, then $\Psi(n,k_1)$ is equal to the number of inequivalent QSD $E$-codes $\mathcal C$ which have length $n$ and $dim(res(\mathcal(\mathcal{C})))=k_1$.
\end{proof}

We need the classification of inequivalent binary self-orthogonal codes. Hou et al., classified the case when $k \le 5$ and $n \le 40$~\cite{Hou}. Pless classified the case when $k = n/2$ and $n \le 20$~\cite{Pless}.

\begin{lemma}
The number of inequivalent binary self-orthogonal $[n,k,2]$ codes is the number of inequivalent binary self-orthogonal $[n-2,k-1]$ codes.
\end{lemma}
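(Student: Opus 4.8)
The plan is to set up an explicit bijection between the set of (equivalence classes of) binary self-orthogonal $[n,k,2]$ codes and the set of binary self-orthogonal $[n-2,k-1]$ codes. The key observation is that a binary self-orthogonal code with minimum distance exactly $2$ must contain a weight-$2$ codeword, and self-orthogonality forces strong structure around that codeword. First I would take a self-orthogonal $[n,k,2]$ code $\mathcal{B}$ and pick a codeword $v$ of weight $2$; after a permutation of coordinates we may assume $v = (1,1,0,\dots,0)$. Since every codeword $w \in \mathcal{B}$ satisfies $w \cdot v = 0$, the first two coordinates of any codeword have equal sum, i.e. $w_1 = w_2$ for all $w \in \mathcal{B}$ (working over $GF(2)$, $w_1 + w_2 = 0$). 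Thus $\mathcal{B}$ lives inside the subspace where coordinates $1$ and $2$ agree.

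Next I would define the map $\mathcal{B} \mapsto \mathcal{B}'$, where $\mathcal{B}'$ is obtained by deleting the first coordinate (equivalently, puncturing at position $1$). Because $w_1 = w_2$ throughout $\mathcal{B}$, no information is lost: the deleted coordinate is a copy of the second, so $\dim \mathcal{B}' = \dim \mathcal{B} = k$... but wait — $v$ itself maps to a weight-$1$ vector $e_1$ in $\mathcal{B}'$, so actually I should puncture at \emph{both} positions $1$ and $2$ on the subcode and track the extra generator separately. The cleaner route: write $\mathcal{B} = \langle v \rangle \oplus \mathcal{B}_0$ where $\mathcal{B}_0$ consists of codewords with a $0$ in position $1$ (hence in position $2$ as well). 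Then deleting coordinates $1$ and $2$ from $\mathcal{B}_0$ gives a binary code $\mathcal{B}''$ of length $n-2$ and dimension $k-1$. I would check that $\mathcal{B}''$ is self-orthogonal: for $w, w' \in \mathcal{B}_0$, their inner product in $GF(2)^{n-2}$ equals their inner product in $GF(2)^n$ (the first two coordinates contribute $0$), which is $0$ by self-orthogonality of $\mathcal{B}$. Conversely, given a self-orthogonal $[n-2,k-1]$ code $\mathcal{D}$, I would build $\mathcal{B} = \langle (1,1,0,\dots,0) \rangle \oplus \{(0,0,d) : d \in \mathcal{D}\}$ and verify it is self-orthogonal of length $n$, dimension $k$, with minimum distance $2$ (the generator $v$ has weight $2$, and it cannot drop lower since a weight-$1$ word is not self-orthogonal).

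Then I would argue the correspondence descends to equivalence classes: a coordinate permutation of $\mathcal{D}$ extends to one of $\mathcal{B}$ fixing $\{1,2\}$, and conversely any self-orthogonal $[n,k]$ code with $d=2$ is permutation-equivalent to one in the normalized form above, with the pair of "doubled" coordinates determined up to the choice that gets absorbed into the equivalence. This gives a well-defined bijection on isomorphism classes, establishing $\Psi(n,k,2) = \Psi(n-2,k-1)$.

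The main obstacle I expect is the well-definedness of the reverse map at the level of equivalence classes — specifically, showing that the choice of weight-$2$ codeword $v$ (there may be several) and the choice of complement $\mathcal{B}_0$ do not affect the equivalence class of $\mathcal{B}''$. I would handle this by showing that any two weight-$2$ codewords, together with the self-orthogonality constraints, lead to codes related by a coordinate permutation; the case where two weight-$2$ codewords share a coordinate versus being disjoint needs a short separate check, but in all cases the punctured code is determined up to permutation equivalence. A secondary point to be careful about is the degenerate situation where $n-2 < k-1$ or where the minimum distance of $\mathcal{B}''$ might itself be $2$ (which is fine — the statement does not constrain the minimum distance of the smaller code).
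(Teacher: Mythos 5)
Your proposal is correct and takes essentially the same route as the paper: puncture at the support of a weight-$2$ codeword (your splitting $\mathcal{B}=\langle v\rangle\oplus\mathcal{B}_0$ and deletion of the two coordinates amounts to exactly this) and reverse the construction by adjoining a weight-$2$ vector on two new coordinates, with the paper simply asserting as obvious the equivalence-class well-definedness that you spell out. One small simplification for your ``separate check'': two distinct weight-$2$ codewords in a binary self-orthogonal code can never share exactly one coordinate (their inner product would be $1$), so only the disjoint-support case actually occurs.
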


\begin{proof}
Let $\mathcal{C}$ be a binary self-orthogonal $[n,k,2]$ code. Let take $x \in \mathcal{C}$ with $wt(x)=2$. Then we can puncture the positions of nonzero coordinates of $x$. We can get self-orthogonal $[n-2,k-1]$ code. Conversely, we can get $[n,k,2]$ codes from $[n-2,k-1]$ codes by adding weight 2 extra vector. Obviously if two codes are equivalent, then the induced codes are also equivalent.
\end{proof}

We can calculate the number of self-orthogonal $[14,6]$ codes and self-orthogonal $[15,6]$ codes using the lemma and the paper of I. Bouyukliev~\cite{Bouyukliev}.

\begin{lemma}
The number of inequivalent binary self-orthogonal $[14,6]$ codes is 27. The number of inequivalent binary self-orthogonal $[15,6]$ codes is 48.
\end{lemma}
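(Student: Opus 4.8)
The plan is to split the self-orthogonal $[n,6]$ codes (for $n=14$ and $n=15$) according to their minimum distance and count each part separately. Recall that a binary self-orthogonal code has only even-weight codewords, so its minimum distance is even; hence every self-orthogonal $[n,6]$ code has $d=2$ or $d\geq 4$, and this dichotomy is preserved under permutation equivalence. First I would dispose of the $d=2$ case using the previous Lemma: the number of inequivalent self-orthogonal $[14,6,2]$ codes equals the number of inequivalent self-orthogonal $[12,5]$ codes, and the number of inequivalent self-orthogonal $[15,6,2]$ codes equals the number of inequivalent self-orthogonal $[13,5]$ codes. Both of these fall inside the range $k\leq 5$, $n\leq 40$ covered by the classification of Hou et al.~\cite{Hou}, so they can be read off directly.

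Next I would handle the $d\geq 4$ case. The self-orthogonal $[14,6]$ and $[15,6]$ codes with minimum distance at least $4$ are exactly those appearing in the classification of self-orthogonal codes carried out in~\cite{Bouyukliev}; I would extract the number of inequivalent such codes in each length from that work (e.g.\ from the associated classification output). Adding the $d=2$ count and the $d\geq 4$ count in each length then gives
$\Psi(14,6)=\Psi(12,5)+\#\{\text{s.o. }[14,6]\text{ codes with }d\geq 4\}$ and
$\Psi(15,6)=\Psi(13,5)+\#\{\text{s.o. }[15,6]\text{ codes with }d\geq 4\}$,
and the assertion is that these sums evaluate to $27$ and $48$ respectively. (If desired, one can iterate the Lemma further, reducing the $d=2$ part of $[12,5]$ and $[13,5]$ again, but since~\cite{Hou} already tabulates all self-orthogonal codes of dimension $\leq 5$ this is unnecessary.)

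The routine part is the two additions; the main obstacle is the bookkeeping against the external classifications. One must confirm that the entries taken from~\cite{Hou} are the counts of \emph{all} inequivalent self-orthogonal $[12,5]$ and $[13,5]$ codes (not merely those of a prescribed minimum distance), since the Lemma gives a bijection onto the full family; and one must confirm that the entries taken from~\cite{Bouyukliev} count precisely the inequivalent self-orthogonal codes of the relevant length, dimension $6$, and minimum distance $\geq 4$, with no overlap with the $d=2$ family. Once these cross-references are pinned down, the proof is exactly the two displayed sums.
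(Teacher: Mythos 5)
Your proposal follows essentially the same route as the paper: split by minimum distance, use the preceding lemma to reduce the $d=2$ case to the classified $[12,5]$ and $[13,5]$ self-orthogonal codes ($15$ and $23$ codes), take the $d=4$ counts ($12$ and $25$) from~\cite{Bouyukliev}, and add. The only detail the paper makes explicit that you leave implicit is why $d\ge 4$ forces $d=4$: the largest minimum distance of any binary $[14,6]$ code is $5$, and although a $[15,6,6]$ code exists it cannot be self-orthogonal.
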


\begin{proof}
Note that the largest minimum distance of binary $[14,6]$ codes is 5. So we can let $d\le5$ where $d$ is the minimum distance. Since the binary self orthogonal codes have only even weights, we consider the cases $d=2$ and $d=4$. By the above lemma, the number of self-orthogonal $[14,6,2]$ codes is the number of self-orthogonal $[12,5]$ codes, that is 15. By the paper of I. Bouyukliev, there exist twelve self-orthogonal $[14,6,4]$ codes~\cite{Bouyukliev}. Hence the number of self-orthogonal $[14,6]$ codes is $15+12=27$.

Similarly we need to compute the numbers of self-orthogonal $[15,6,2]$ codes and $[15,6,4]$ codes. They are 23 and 25, respectively. Even though there exist a $[15,6,6]$ code, it cannot be self-orthogonal. So the number of $[15,6]$ codes is $23+25=48$.
\end{proof}

So we can finish the following table.

\vskip 3mm
\resizebox{\textwidth}{!}{
\begin{tabular}{|c|c|c|c|c|c|c|c|c|c|c|c|c|c|c|c|c|c|}
\hline
$n$ & $1$ & \multicolumn{2}{|c|} {2} & \multicolumn{2}{|c|} {3} & \multicolumn{3}{|c|} {4} & \multicolumn{3}{|c|} {5} & \multicolumn{4}{|c|} {6} & \multicolumn{2}{|c|} 7 \\
\hline
$k_1$ & 0 & 0 & 1 & 0 & 1 & 0 & 1 & 2 & 0 & 1 & 2 & 0 & 1 & 2 & 3 & 0 & 1 \\
\hline
$N$ & 1 & 1 & 1 & 1 & 1 & 1 & 2 & 1 & 1 & 2 & 1 & 1 & 3 & 3 & 1 & 1 & 3 \\
\hline
\end{tabular}
}
\vskip 3mm
\resizebox{\textwidth}{!}{
\begin{tabular}{|c|c|c|c|c|c|c|c|c|c|c|c|c|c|c|c|c|c|}
\hline
\multicolumn{2}{|c|} {7}& \multicolumn{5}{|c|} {8} & \multicolumn{5}{|c|} {9} & \multicolumn{6}{|c|} {10} \\
\hline
2 & 3 & 0 & 1 & 2 & 3 & 4 & 0 & 1 & 2 & 3 & 4 & 0 & 1 & 2 & 3 & 4 & 5 \\
\hline
3 & 2 & 1 & 4 & 6 & 5 & 2 & 1 & 4 & 6 & 6 & 3 & 1 & 5 & 10 & 12 & 9 & 2 \\
\hline
\end{tabular}
}
\vskip 3mm
\resizebox{\textwidth}{!}{
\begin{tabular}{|c|c|c|c|c|c|c|c|c|c|c|c|c|c|c|c|c|c|}
\hline
\multicolumn{6}{|c|} {11} & \multicolumn{7}{|c|} {12} & \multicolumn{5}{|c|} {13} \\
\hline
0 & 1 & 2 & 3 & 4 & 5 & 0 & 1 & 2 & 3 & 4 & 5 & 6 & 0 & 1 & 2 & 3 & 4 \\
\hline
1 & 5 & 10 & 14 & 12 & 4 & 1 & 6 & 16 & 26 & 28 & 15 & 3 & 1 & 6 & 16 & 30 & 36 \\
\hline
\end{tabular}
}
\vskip 3mm
\resizebox{\textwidth}{!}{
\begin{tabular}{|c|c|c|c|c|c|c|c|c|c|c|c|c|c|c|c|c|c|}
\hline
\multicolumn{2}{|c|} {13} & \multicolumn{8}{|c|} {14} & \multicolumn{8}{|c|} {15} \\
\hline
5 & 6 & 0 & 1 & 2 & 3 & 4 & 5 & 6 & 7 & 0 & 1 & 2 & 3 & 4 & 5 & 6 & 7 \\
\hline
23 & 6 & 1 & 7 & 23 & 51 & 75 & 61 & 27 & 4 & 1 & 7 & 23 & 58 & 98 & 94 & 48 & 10 \\
\hline
\end{tabular}
}
\vskip 6mm

\section{Quasi self-dual DNA codes over $E$}

\begin{theorem}[\cite{Alahmadi1}]
Let $\mathcal{C}$ be a QSD code over $E$. Then
\begin{equation*}
CWE_\mathcal{C}(w,x,y,z) = J(res(\mathcal{C}),tor(\mathcal{C}))(w,x,y,z)
\end{equation*}
where $J(A,B)$ of two binary linear codes $A,B$ is the joint weight enumerator defined by
\begin{equation*}
J(A,B)(w,x,y,z) \sum_{u\in A, v \in B} w^{i(u,v)} x^{j(u,v)} y^{k(u,v)} z^{l(u,v)},
\end{equation*}
$i,j,k,l$ are the integers of the number of indices $\iota \in \{1,\cdots,n\}$ with $(u_\iota,v_\iota)=(0,0),(0,1),(1,0)$ and $(1,1)$, respectively.
\end{theorem}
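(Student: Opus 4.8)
The plan is to leverage the module decomposition $\mathcal{C} = a\,res(\mathcal{C}) \oplus c\,tor(\mathcal{C})$ from Theorem~\ref{Thm-3} to set up an explicit bijection between $\mathcal{C}$ and $res(\mathcal{C}) \times tor(\mathcal{C})$, and then read off the complete weight enumerator coordinate by coordinate.

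First I would record the local structure of $E$: with $J=\{0,c\}$ the maximal ideal and $r=\psi: E \to E/J \cong GF(2)$ the reduction, every $e\in E$ has a \emph{unique} expression $e = as + ct$ with $s,t\in GF(2)$, namely $0\leftrightarrow(0,0)$, $a\leftrightarrow(1,0)$, $c\leftrightarrow(0,1)$, $b=a+c\leftrightarrow(1,1)$; moreover $\psi(as+ct)=s$ since $\psi$ is a ring homomorphism with $\psi(a)=1$, $\psi(c)=0$. This identifies $E$ with $GF(2)^2$ as a set, compatibly with the four monomials $w,x,y,z$ of the complete weight enumerator once we fix the ordering of the elements of $E$ so that $0,c,a,b$ are tagged by $w,x,y,z$ respectively.

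Next, for a codeword $e=(e_1,\dots,e_n)\in\mathcal{C}$, apply the decomposition coordinatewise: write $e_\iota = a u_\iota + c v_\iota$ with $u_\iota,v_\iota\in GF(2)$, so that $u:=(u_1,\dots,u_n)=\psi(e)\in res(\mathcal{C})$. Since $au\in a\,res(\mathcal{C})\subseteq\mathcal{C}$ and $e\in\mathcal{C}$, we get $cv=e-au\in\mathcal{C}$, hence $v\in tor(\mathcal{C})$; thus $e\mapsto(u,v)$ is a well-defined map $\mathcal{C}\to res(\mathcal{C})\times tor(\mathcal{C})$, with inverse $(u,v)\mapsto au+cv$. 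This inverse is injective because $a(u-u')=c(v'-v)$ forces both sides to be $0$ (their coordinates lie in the disjoint sets $\{0,a\}$ and $\{0,c\}$, and $a\cdot,\,c\cdot$ are injective on $GF(2)$), and it is therefore bijective since $|res(\mathcal{C})|\cdot|tor(\mathcal{C})| = 2^n = |\mathcal{C}|$ by Theorem~\ref{Thm-4}(iii) together with the QSD hypothesis.

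Finally I would match monomials. At position $\iota$ the value $e_\iota$ equals $0,c,a,b$ exactly when $(u_\iota,v_\iota)$ equals $(0,0),(0,1),(1,0),(1,1)$, so the occurrence counts $n_0(e),n_c(e),n_a(e),n_b(e)$ are precisely the joint-weight counts $i(u,v),j(u,v),k(u,v),l(u,v)$. Hence each term $w^{n_0(e)}x^{n_c(e)}y^{n_a(e)}z^{n_b(e)}$ of $CWE_{\mathcal{C}}$ equals $w^{i(u,v)}x^{j(u,v)}y^{k(u,v)}z^{l(u,v)}$, and summing over $e\in\mathcal{C}$ — equivalently, via the bijection, over all pairs $(u,v)\in res(\mathcal{C})\times tor(\mathcal{C})$ — yields exactly $J(res(\mathcal{C}),tor(\mathcal{C}))(w,x,y,z)$. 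The only step requiring real care is this last bookkeeping: fixing which ring element corresponds to which $GF(2)^2$-state and hence to which variable, so the complete weight enumerator's monomials line up with the joint weight enumerator's; everything else is an immediate consequence of Theorems~\ref{Thm-3} and~\ref{Thm-4}.
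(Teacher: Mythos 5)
Your proposal is correct. Note, however, that the paper itself offers no proof of this statement: it is imported verbatim from~\cite{Alahmadi1}, so there is no internal argument to compare against. Your argument is the natural (and essentially the standard) one, and it follows exactly the route suggested by the machinery the paper does quote: the unique expression $e=as+ct$ with $s,t\in GF(2)$ in the local ring $E$, the decomposition $\mathcal{C}=a\,res(\mathcal{C})\oplus c\,tor(\mathcal{C})$ of Theorem~\ref{Thm-3}, the resulting coordinatewise bijection $e\leftrightarrow(u,v)\in res(\mathcal{C})\times tor(\mathcal{C})$, and the observation that $e_\iota\in\{0,c,a,b\}$ corresponds precisely to $(u_\iota,v_\iota)\in\{(0,0),(0,1),(1,0),(1,1)\}$, so the occurrence counts of $\mathcal{C}$ are the joint-weight counts $i,j,k,l$. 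Two small remarks. First, the cardinality step via Theorem~\ref{Thm-4}(iii) is not really needed: since Theorem~\ref{Thm-3} already asserts $\mathcal{C}=a\,res(\mathcal{C})\oplus c\,tor(\mathcal{C})$, every pair $(u,v)$ yields $au+cv\in\mathcal{C}$ and uniqueness of the coordinatewise representation gives the bijection directly; your count is a harmless alternative. Second, you are right to flag the variable-tagging issue: the identity as stated forces the convention $0,c,a,b\leftrightarrow w,x,y,z$, which is not the ordering implicit in the paper's definition of $GCW$ via $CWE(x,x,y,y)$ (there $w,x$ tag $G=a$ and $C=b$); this only permutes the variables and swaps the roles of $x$ and $y$ in the $GC$-weight enumerator, so it is a bookkeeping discrepancy in the source rather than a gap in your proof.
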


\begin{theorem} \label{thm-GC}
Let $\mathcal{C}$ be a QSD code over $E$. Then
\begin{equation*}
GCW_\mathcal{C}(x,y) = \displaystyle \sum_{i=0}^n 2^{n-k_1} A_i(res(C)) x^i y^{n-i}
\end{equation*}
where $n = | \, \mathcal{C} \, |, k_1=dim(res(\mathcal{C}))$ and $A_i(res(\mathcal{C}))$ is the binary weight distribution of $res(\mathcal{C})$.
\end{theorem}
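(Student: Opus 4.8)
The plan is to deduce the statement from the previous theorem, which identifies $CWE_\mathcal{C}(w,x,y,z)$ with the joint weight enumerator $J(res(\mathcal{C}),tor(\mathcal{C}))(w,x,y,z)$, by collapsing the four monomial variables down to the two GC-weight variables. First I would recall the chosen identification $f$ with $f(A)=0,f(T)=c,f(G)=a,f(C)=b$, under which $\{G,C\}$ corresponds to $\{a,b\}\subseteq E$ and $\{A,T\}$ to $\{0,c\}\subseteq E$; consequently, in the complete weight enumerator the two variables recording occurrences of $a$ and $b$ are exactly the ones that must be set equal to the ``GC'' variable $x$, and the two recording $0$ and $c$ are set equal to the ``AT'' variable $y$. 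Equivalently, $GCW_\mathcal{C}(x,y)$ is the specialization of $J(res(\mathcal{C}),tor(\mathcal{C}))$ obtained by assigning the variable $x$ to the index pairs $(u_\iota,v_\iota)\in\{(1,0),(1,1)\}$ and the variable $y$ to $\{(0,0),(0,1)\}$.

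Next, I would use the decomposition $\mathcal{C}=a\,res(\mathcal{C})\oplus c\,tor(\mathcal{C})$ of Theorem~\ref{Thm-3}: every codeword is uniquely $a u + c v$ with $u\in res(\mathcal{C})$ and $v\in tor(\mathcal{C})$, and at coordinate $\iota$ this word equals $0,c,a,b$ according as $(u_\iota,v_\iota)=(0,0),(0,1),(1,0),(1,1)$. Hence its $GC$-content is precisely the number of coordinates where $u_\iota=1$, namely $wt(u)$, which does not depend on $v$. Summing over all of $\mathcal{C}$ and grouping by $u$ then by weight,
\[
GCW_\mathcal{C}(x,y)=\sum_{u\in res(\mathcal{C})}\sum_{v\in tor(\mathcal{C})} x^{wt(u)}y^{\,n-wt(u)}=|tor(\mathcal{C})|\sum_{i=0}^{n}A_i(res(\mathcal{C}))\,x^{i}y^{\,n-i}.
\]
Finally, by Theorem~\ref{Thm-4}(ii) we have $tor(\mathcal{C})=res(\mathcal{C})^{\perp}$, so $\dim tor(\mathcal{C})=n-k_1$ and $|tor(\mathcal{C})|=2^{\,n-k_1}$, which yields the claimed formula.

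The only real obstacle here is bookkeeping rather than depth: one must pin down exactly which of the four $CWE$ variables fall in the $GC$ class under $f$, and one must justify that $(u,v)\mapsto a u + c v$ is a genuine bijection $res(\mathcal{C})\times tor(\mathcal{C})\to\mathcal{C}$, so that the double sum ranges over all of $\mathcal{C}$ with the correct multiplicities — this is the content of the ``$\oplus$'' in Theorem~\ref{Thm-3}, using that multiplication by $a$ and by $c$ are injective on ${GF(2)}^n$ with $a\,{GF(2)}=\{0,a\}$, $c\,{GF(2)}=\{0,c\}$, and $a+c=b$. As a sanity check, and as an alternative route that avoids the joint weight enumerator entirely, the middle displayed computation can be carried out directly from Theorem~\ref{Thm-3} without invoking the previous theorem; consistency also follows from $\sum_i 2^{\,n-k_1}A_i(res(\mathcal{C}))=2^{\,n-k_1}\cdot 2^{k_1}=2^{n}=|\mathcal{C}|$.
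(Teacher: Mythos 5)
Your argument is correct, and it reaches the formula by a slightly different route than the paper. The paper's proof stays at the level of weight enumerators: it starts from the preceding theorem $CWE_\mathcal{C}=J(res(\mathcal{C}),tor(\mathcal{C}))$ and then quotes a MacWilliams--Sloane identity for the joint weight enumerator, namely that specializing $J$ at $(x,x,y,y)$ collapses it to $|\,tor(\mathcal{C})\,|\cdot W_{res(\mathcal{C})}(x,y)=2^{n-k_1}W_{res(\mathcal{C})}(x,y)$. You instead prove exactly this collapse from first principles: using the decomposition $\mathcal{C}=a\,res(\mathcal{C})\oplus c\,tor(\mathcal{C})$ of Theorem~\ref{Thm-3}, you observe that a codeword $au+cv$ has coordinate values $0,c,a,b$ according to $(u_\iota,v_\iota)=(0,0),(0,1),(1,0),(1,1)$, so its $GC$-content under the identification $f(G)=a,f(C)=b$ is $wt(u)$ independently of $v$, and summing over $v\in tor(\mathcal{C})=res(\mathcal{C})^{\perp}$ produces the factor $2^{n-k_1}$. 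So the key ingredient differs (Theorem~\ref{Thm-3} plus a direct count, versus Theorem~6 plus the cited joint-weight-enumerator identity), but the content is the same; your version is more self-contained, does not need the joint weight enumerator at all, and incidentally makes the correct normalizing factor $|\,tor(\mathcal{C})\,|=2^{n-k_1}$ explicit, whereas the paper's displayed identity writes $dim(tor(\mathcal{C}))$ where the size of the torsion code is meant. (Both you and the paper implicitly read the hypothesis ``$n=|\,\mathcal{C}\,|$'' in the statement as ``$n$ is the length,'' which is the intended meaning since $|\,\mathcal{C}\,|=2^n$.)
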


\begin{proof}
By MacWilliams~\cite{MacWilliams},
\begin{equation*}
W_{res(\mathcal{C})}(x,y) = \frac{1}{dim(tor(\mathcal{C}))} J(res(\mathcal{C}),tor(\mathcal{C}))(x,x,y,y).
\end{equation*}
So
\begin{equation*}
\begin{aligned}
GCW_\mathcal{C}(x,y)=CWE_\mathcal{C}(x,x,y,y)=J(res(\mathcal{C}),tor(\mathcal{C}))(x,x,y,y) \\ = dim(tor(\mathcal{C})) W_{res(\mathcal{C})}(x,y) = 2^{n-k_1} W_{res(\mathcal{C})}(x,y).
\end{aligned}
\end{equation*}
\end{proof}

For example, let $res(\mathcal{C}) = \left\langle (1 \ 1 \ \cdots \ 1 \ 0 \ 0 \cdots \ 0) \right\rangle$, where the codeword has $m$ ones and $n-m$ zeros ($m$ is even). Then $res(\mathcal{C})$ is a 1-dimensional code, therefore
\begin{equation*}
GCW_\mathcal{C}(x,y)= 2^{n-1} x^m y^{n-m} + 2^{n-1} y^n.
\end{equation*}
We can get $(n,2^{n-1})$ DNA codes which have fixed $GC$-content constraint $m$ and 0 (resp).

We want to give reverse (and reverse-complement) constraints for QSD DNA codes. Since any residue code has zero vector, its QSD DNA code has the vector $(cc\cdots c)$. Since the complement map is defined $x^C = x + c$ in the ring $E$, $x^{C} \in \mathcal{C}$ for any $x \in \mathcal{C}$ and for any QSD DNA code $\mathcal{C}$. Then $\min\{d_H(x^{RC},y) \, | \, x,y \in \mathcal{C} \}=0$. Thus we need to give reverse-complement constraints to a subcode which have a fixed $GC$-content constraint.

\begin{defn}{\em
Let $\mathcal{C}$ be a QSD DNA code of length $n$. Let $\mathcal{C}_m$ be the subcode of $\mathcal{C}$, which has a fixed $GC$-content constraint $m$. This $\mathcal{C}_m$ has permutation equivalence codes, $\mathcal{P}_m = \{ \sigma(\mathcal{C}_m) \, | \, \sigma \in S_n \}$. Define
\begin{equation*}
d_{RC}^{\, m} := \max_{\mathcal{C}^{\prime} \in \mathcal{P}_m}\{ d_{\mathcal{C}^{\prime}} \}
\end{equation*}
where $d_{\mathcal{C}^{\prime}} = \min \{d_H(x^{RC},y) \, | \, x,y \in \mathcal{C}^{\prime} \}$.
}
\end{defn}

It is clear that $d_{RC}^{\, 0} = 0$ since the zero vector and $(cc\cdots c)$ in $\mathcal{C}_0$.

\begin{theorem}\label{lemma1}
Let $\mathcal{C}$ be a QSD code over $E$ satisfying $res(\mathcal{C}) = \left\langle a_1 \right\rangle = \left\langle (1 \ \cdots \ 1 \ 0 \cdots \ 0) \right\rangle$ where $d_H(a_1) = m$ ($m$ is even). Then $d_{RC}^{\, m} = 2\min\{m,n-m\}$.
\end{theorem}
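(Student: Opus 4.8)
The plan is to reduce the statement to a small extremal problem about the reversal involution $\tau\colon i\mapsto n+1-i$ on $\{1,\dots,n\}$. Put $\mathcal B=res(\mathcal C)=\langle a_1\rangle$, so $tor(\mathcal C)=\mathcal B^{\perp}$ by Theorem~\ref{Thm-4} and $\mathcal C=a\mathcal B\oplus c\mathcal B^{\perp}$ by Theorem~\ref{Thm-3}. Since every $e\in E$ is uniquely $e=as+ct$ with $s,t\in\{0,1\}$, each codeword of $\mathcal C$ is $a u+c v$ with $u\in\{0,a_1\}$, $v\in\mathcal B^{\perp}$, and under $f$ the coordinate $a u_i+c v_i$ lies in $\{G,C\}$ exactly when $u_i=1$. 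Hence the words of $GC$-content $m$ are precisely those with $u=a_1$, i.e.
\begin{equation*}
\mathcal C_m=\{\,a a_1+c v:\ v\in\mathcal B^{\perp}\,\},
\end{equation*}
a coset of $c\mathcal B^{\perp}$ of size $2^{n-1}$. Applying a permutation $\sigma$ carries this to $a a_1'+c\,(a_1')^{\perp}$ with $a_1'=\sigma(a_1)$ an \emph{arbitrary} weight-$m$ binary vector; writing $S=\mathrm{supp}(a_1')$, the class $\mathcal P_m$ is thus parametrized by the $m$-subsets $S\subseteq\{1,\dots,n\}$, and $d_{RC}^{\,m}=\max_{|S|=m} d_{\mathcal C'(S)}$.

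Next I would evaluate $d_{\mathcal C'(S)}$. For $x=a a_1'+c v$ and $y=a a_1'+c w$ with $v,w\in(a_1')^{\perp}$, using $x^{C}=x+(c\cdots c)$ and that coordinate reversal commutes with the module decomposition,
\begin{equation*}
x^{RC}=a\,(a_1')^{R}+c\,(v^{R}+\mathbf 1).
\end{equation*}
Comparing the $i$-th coordinates of $x^{RC}$ and $y$ via the unique form $as+ct$: on the $|S\triangle\tau(S)|=2(m-|S\cap\tau(S)|)$ positions where $(a_1')^{R}$ and $a_1'$ differ the two coordinates differ for \emph{every} choice of $v,w$ (the $a$-parts already disagree), so $d_{\mathcal C'(S)}\ge 2(m-|S\cap\tau(S)|)$; on the remaining positions — those with $(a_1')^{R}_i=(a_1')_i$ — the number of disagreements equals the number of such $i$ with $(v^{R})_i=w_i$, and I would show this can be driven to $0$. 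Indeed, outside $S\cup\tau(S)$ both $w$ and $v^{R}$ are free; on $S\cap\tau(S)$ they stay free whenever $S\neq\tau(S)$, since the single parity constraints cutting out $(a_1')^{\perp}$ and $((a_1')^{R})^{\perp}$ can be absorbed on the nonempty sets $S\setminus\tau(S)$ and $\tau(S)\setminus S$; and in the palindromic case $S=\tau(S)$ one simply takes $v^{R}=w+\mathbf 1$ on all coordinates, which is legitimate precisely because $\sum_{i\in S}(w_i+1)=|S|\equiv 0\pmod 2$ (this is where the hypothesis ``$m$ even'' enters). Hence $d_{\mathcal C'(S)}=2(m-|S\cap\tau(S)|)$ and
\begin{equation*}
d_{RC}^{\,m}=2\Bigl(m-\min_{|S|=m}|S\cap\tau(S)|\Bigr).
\end{equation*}

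Finally I would solve the extremal problem $\min_{|S|=m}|S\cap\tau(S)|=\max\{0,\,2m-n\}$. The bound $|S|+|\tau(S)|\le n+|S\cap\tau(S)|$ gives $|S\cap\tau(S)|\ge 2m-n$, and it is trivially $\ge 0$; for the reverse inequality, choose one point from each of $\min\{m,\lfloor n/2\rfloor\}$ distinct $2$-element $\tau$-orbits and, if $m$ is larger, fill up with whole orbits and, when $n$ is odd and still short by one, the fixed point of $\tau$ — this $S$ realizes $|S\cap\tau(S)|=\max\{0,2m-n\}$. Substituting, $d_{RC}^{\,m}=2\bigl(m-\max\{0,2m-n\}\bigr)=2\min\{m,\,n-m\}$.

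The step I expect to be the real work is the middle one: making precise that the ``torsion freedom'' in $v,w\in(a_1')^{\perp}$ always suffices to annihilate the count $\#\{i:(v^{R})_i=w_i\}$ on the agreement positions — in particular the palindromic case $S=\tau(S)$, where one genuinely needs $m\equiv 0\pmod 2$ — and in checking that the parity constraints defining $(a_1')^{\perp}$ and $((a_1')^{R})^{\perp}$ never obstruct the required choices. The first and last steps are, respectively, bookkeeping with the structure theorems and an elementary counting argument.
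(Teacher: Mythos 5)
Your argument is correct, and it follows the same overall strategy as the paper: identify the fixed-$GC$-content subcode, observe that for every arrangement $\sigma$ the quantity $d_{\mathcal{C}'}$ equals $d_H(\sigma(a_1),\sigma(a_1)^R)$, and then maximize this reversal-distance over all weight-$m$ supports. The difference is in completeness: the paper only argues the lower bound $d_{\mathcal{C}'}\ge d_H(\sigma(a_1),\sigma(a_1)^R)$ (the $a$-parts of $x^{RC}$ and $y$ already disagree there) and then asserts equality and asserts that the identity arrangement is optimal, whereas you actually prove both missing halves. Your coset description $\mathcal{C}_m=a\,a_1'+c\,(a_1')^{\perp}$ together with the parity argument showing that the $c$-parts can be matched on all agreement positions is exactly the step the paper skips, and it is the only place where the hypothesis that $m$ is even is genuinely used (the palindromic case $S=\tau(S)$); likewise your bound $|S\cap\tau(S)|\ge\max\{0,2m-n\}$ supplies the optimality argument the paper omits. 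One small slip: in the extremal construction for $n$ odd and $2m>n$, taken literally ("fill up with whole orbits and, when still short by one, the fixed point") you would never invoke the fixed point and would land at $|S\cap\tau(S)|=2m-n+1$; the correct recipe is to put the fixed point of $\tau$ into $S$ \emph{whenever} $n$ is odd and $2m>n$ and only then complete $m-1-\lfloor n/2\rfloor$ orbits, which realizes $|S\cap\tau(S)|=2m-n$ (e.g.\ $n=5$, $m=4$, $S=\{1,2,3,4\}$, matching the table value $d_{RC}^4=2$). This is a one-line fix and does not affect the stated value or the rest of the proof.
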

\begin{proof}
Let take $\sigma_1, \sigma_2 \in S_n$. Suppose that
\begin{equation*}
d_H(\sigma_1(a_1),{\sigma_1(a_1)}^R) \le d_H(\sigma_2(a_1),{\sigma_2(a_1)}^R).
\end{equation*}
Denote $\sigma_1(a_1) = (x_1 \cdots x_n)$ where $x_i \in GF(2)$. Then $d_H(\sigma_1(a_1),{\sigma_1(a_1)}^R) =$ the number of $x_i's$, where $x_i \ne x_{n-i}$. We claim that $d_{\sigma_1(\mathcal{C}_m)} =  d_H(\sigma_1(a_1),{\sigma_1(a_1)}^R)$. If $x_j \ne x_{n-j}$ for some $j$, then $ax_j + ct_1 \ne ax_{n-j} + ct_2$ for any $t_1, t_2 \in GF(2)$ (since $a(x_j+x_{n-j}) = a \ne c(t_1+t_2)$). So $d_H(x^{RC},y) \ge 2$ for any $x,y \in \sigma_1(\mathcal{C}_m)$ (For any $x,y$, it generated by $\sigma_1(a_1)$ and so the counting appears in the $j$th and $(n-j)$th position). If there exist $k$ coordinates $j^{\prime}$s which satisfy $x_{j^{\prime}} \ne x_{n-j^{\prime}}$, then $d_H(x^{RC},y) \ge 2k$ for any $x,y \in \sigma_1(\mathcal{C}_m)$. Therefore $d_{\sigma_1(\mathcal{C}_m)} =  d_H(\sigma_1(a_1),{\sigma_1(a_1)}^R)$. This claim means that we can get the minimum distance of the subcode $\sigma_1(\mathcal{C}_m)$ using the distance of $\sigma_1(a_1)$.

Then by the assumption we can get that
\begin{equation*}
d_{\sigma_1(\mathcal{C}_m)} \le d_{\sigma_2(\mathcal{C}_m)}.
\end{equation*}
Therefore we need to increase the number of $x_i$'s satisfying $x_i \ne x_{n-i}$. If $m < n/2$, then we can take $\sigma \in S_n$ where $\sigma(a_1) = a_1 = (1 \ \cdots \ 1 \ 0 \cdots \ 0)$ so that there are $m$ positions of $x_i$'s satisfying $x_i \ne x_{n-i}$. Thus $d_{\sigma(\mathcal{C}_m)} = 2m$. If $m \ge n/2$, then also $\sigma(a_1) = a_1 = (1 \ \cdots \ 1 \ 0 \cdots \ 0)$ has $n-m$ positions of $x_i$'s satisfying $x_i \ne x_{n-i}$ so that $d_{\sigma(\mathcal{C}_m)} = 2(n-m)$. Therefore
\begin{equation*}
d_{RC}^m=
\begin{cases}
2m, & \mbox{if } m < n/2 \\
2(n-m) & \mbox{if } m \ge n/2
\end{cases}.
\end{equation*}
If $2m < 2(n-m)$, then $m < n/2$ so $d_{RC}^m = 2m$. If $2m \ge 2(n-m)$, then $m \ge n/2$ so $d_{RC}^m = 2(n-m)$. Thus $d_{RC}^{\, m} = 2\min\{m,n-m\}$.
\end{proof}

\begin{theorem}
Let $\mathcal{C}$ be a QSD code over $E$ satisfying
\begin{equation*}
res(\mathcal{C}) = \left\langle \begin{pmatrix} a_1 \\ a_2 \end{pmatrix} \right\rangle = \left\langle \begin{pmatrix} 1 \cdots 1 & 0 \cdots 0 & 0 \cdots 0 \\ 0 \cdots 0 & 1 \cdots 1 & 0 \cdots 0 \end{pmatrix} \right\rangle
\end{equation*}
where $d_H(a_1)=m_1$ and $d_H(a_2)=m_2$ ($m_1$ and $m_2$ are positive even integers). Let $m=m_1+m_2$. Then the following hold.
\begin{enumerate}
\item If $m_1=m_2$, then $d_{RC}^{\, m_1} = d_{RC}^{\, m_2} = \min\{m, 2(n - \lfloor n/2 \rfloor) - m \}$ and $d_{RC}^{\, m} = 2\min\{m,n-m \}$.
\item If $m_1\ne m_2$, then $d_{RC}^{\, m_1} = 2\min\{m_1,n-m_1 \}$, $d_{RC}^{\, m_2} = 2\min\{m_2,n-m_2 \}$ and $d_{RC}^{\, m} = 2\min\{m,n-m \}$.
\end{enumerate}
Note that $2(n - \lfloor n/2 \rfloor) - m = \begin{cases}
n-m, & \mbox{if } n \mbox{ is even} \\
n-m+1, & \mbox{if } n \mbox{ is odd}
\end{cases}$.
\end{theorem}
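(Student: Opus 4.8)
The plan is to reduce every $d_{RC}^{\,k}$ to a combinatorial quantity about the binary residue vectors and then optimize over coordinate permutations, following the pattern of the proof of Theorem~\ref{lemma1}. The key preliminary observation, implicit there, is this: if $\mathcal{D}\subseteq\mathcal{C}$ is the subcode of codewords whose residue lies in a prescribed set $\mathcal{R}\subseteq res(\mathcal{C})$ and $\mathcal{C}'=\sigma(\mathcal{D})$, then
\begin{equation*}
d_{\mathcal{C}'}=\min_{u,v\in\mathcal{R}}d_H\!\left(\sigma(u)^{R},\,\sigma(v)\right).
\end{equation*}
For ``$\ge$'': the reduction $\psi\colon E\to GF(2)$ is a ring homomorphism with $\psi(e+c)=\psi(e)$, so a codeword of residue $\sigma(u)$ has $x^{RC}$ of residue $\sigma(u)^{R}$, and any coordinate where $\sigma(u)^{R}$ differs from $\sigma(v)$ forces $x^{RC}$ and $y$ to differ. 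For ``$\le$'': every weight in $res(\mathcal{C})$ is even, so $\mathbf 1\in res(\mathcal{C})^{\perp}=tor(\mathcal{C})$, and the pair with torsion parts $\mathbf 1$ and $\mathbf 0$ gives $x^{RC}=a\,\sigma(u)^{R}$ and $y=a\,\sigma(v)$, hence $d_H(x^{RC},y)=d_H(\sigma(u)^{R},\sigma(v))$. Writing $\rho$ for coordinate reversal and $S_w$ for the support of a binary vector $w$, the right-hand side is $\min_{u,v\in\mathcal{R}}|\rho S_{\sigma(u)}\triangle S_{\sigma(v)}|$, so $d_{RC}^{\,k}$ is the maximum of this over all placements of the supports.

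Next I would dispose of the singleton cases. The content-$m$ subcode always has $\mathcal{R}=\{a_1+a_2\}$, a weight-$m$ vector, and when $m_1\ne m_2$ the content-$m_i$ subcode has $\mathcal{R}=\{a_i\}$, a weight-$m_i$ vector; in each case the minimum above is over a one-element set and we are exactly in the setting of Theorem~\ref{lemma1}. A mirror pair $\{i,\rho i\}$ contributes $2$ to $d_H(\sigma(u),\sigma(u)^{R})$ precisely when it meets the support in one point, one can arrange at most $\min\{\mathrm{wt}(u),\,n-\mathrm{wt}(u)\}$ such pairs, this number never exceeds $\lfloor n/2\rfloor$ so the supply of mirror pairs is not the bottleneck, and a placement meeting the bound exists. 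Hence $d_{RC}^{\,m}=2\min\{m,n-m\}$ and (for $m_1\ne m_2$) $d_{RC}^{\,m_i}=2\min\{m_i,n-m_i\}$, which gives part (ii) and the formula for $d_{RC}^{\,m}$ in part (i).

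The substantive case is $m_1=m_2=:m_0$, where the content-$m_0$ subcode is the union of both residue cosets, so $\mathcal{R}=\{a_1,a_2\}$. Using the isometry $d_H(p^{R},q)=d_H(p,q^{R})$, the four pairs collapse to three values, and with $S_1=S_{\sigma(a_1)}$ and $S_2=S_{\sigma(a_2)}$ disjoint $m_0$-subsets of $\{1,\dots,n\}$,
\begin{equation*}
d_{\sigma(\mathcal{C}_{m_0})}=\min\!\left\{\,|S_1\triangle\rho S_1|,\ |S_2\triangle\rho S_2|,\ |S_1\triangle\rho S_2|\,\right\}.
\end{equation*}
Each term equals $2m_0$ minus twice the corresponding intersection number, so I must minimize $\max\{|S_1\cap\rho S_1|,\,|S_2\cap\rho S_2|,\,|S_1\cap\rho S_2|\}$ over disjoint placements. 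I would analyze this through the $\rho$-orbit structure: classify each mirror pair, and the central fixed coordinate when $n$ is odd, as empty, full for $S_1$, full for $S_2$, half for $S_1$, half for $S_2$, or mixed; note that $|S_1\cup S_2|=2m_0$ forces a lower bound (in terms of $2m_0-n$) on the sum of the three intersection numbers once $2m_0>n$; and use the parity fact that $|S_i\cap\rho S_i|$ is even when $n$ is even, and otherwise has the parity of the center-usage bit. Balancing the three quantities against these constraints, together with the bound $\lfloor n/2\rfloor$ on the number of mirror pairs, should pin the optimum to $\min\{m,\,2(n-\lfloor n/2\rfloor)-m\}$; for the lower bound one exhibits explicit disjoint $S_1,S_2$ of the stated block shape meeting every inequality with equality, routing one coordinate through the centre when $n$ is odd. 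The main obstacle is precisely this last optimization: keeping all three parameters simultaneously small requires juggling the disjointness of $S_1$ and $S_2$, the forced evenness of the self-intersection terms, and the scarcity of mirror pairs when $2m_0$ is close to $n$, and the odd-$n$ subcase must be treated on its own to see that using the central coordinate attains the maximum.
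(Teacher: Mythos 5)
Your reduction lemma is sound and is in fact a cleaner formulation of what the paper argues coordinatewise inside Theorem~\ref{lemma1}: the inequality $d_H(x^{RC},y)\ge d_H(\psi(x)^R,\psi(y))$ together with the torsion choice $t=\mathbf{1}\in res(\mathcal{C})^{\perp}=tor(\mathcal{C})$ does give $d_{\mathcal{C}'}=\min_{u,v\in\mathcal{R}}d_H(\sigma(u)^R,\sigma(v))$, and the singleton cases (all of part (ii) and the $d_{RC}^{\,m}$ claims) follow as you say. The genuine gap is the case $m_1=m_2$, i.e.\ the only substantive claim of part (i): you correctly reduce $d_{RC}^{\,m_1}$ to minimizing $\max\{|S_1\cap\rho S_1|,|S_2\cap\rho S_2|,|S_1\cap\rho S_2|\}$ over disjoint placements, but you never carry out that min--max optimization; you only assert that balancing the constraints ``should'' give $\min\{m,2(n-\lfloor n/2\rfloor)-m\}$ and yourself flag this as the main obstacle. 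That optimization is exactly where the paper's proof does its work (explicit placements in the subcases $m<n/2$, $m\ge n/2$ with $n$ even, $n$ odd), so as written your proposal does not prove part (i).

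Moreover, the step you defer cannot be completed toward the stated target, because the claimed formula is not the true optimum throughout the range. For $n=7$, $m_1=m_2=2$: the set $S_1\cup S_2$ has $4$ elements while there are only three mirror pairs plus the fixed centre, so every placement has $\max\{\cdot\}\ge 1$ and hence $d_{RC}^{\,2}\le 2$ (and $2$ is attained), whereas the formula gives $\min\{4,\,2\cdot4-4\}=4$; the paper's own table entry for $n=7$, $k_1=2$ lists $d_{RC}^{\,2}=2$. Conversely, for $n=8$, $m_1=m_2=4$ the formula gives $0$, yet the placement $S_1=\{1,2,3,8\}$, $S_2=\{4,5,6,7\}$ makes all three sets $S_1\triangle\rho S_1$, $S_2\triangle\rho S_2$, $S_1\triangle\rho S_2$ have size $4$, so by your own lower-bound direction $d_{RC}^{\,4}\ge 4$. (The paper's proof asserts optimality of one particular placement ``to avoid the coincidence'' without justification, and that is precisely where it, too, fails.) So finishing your argument requires honestly solving the min--max problem with the parity and counting constraints you list, and doing so forces a correction of the formula (or a restriction of the parameter range) rather than a derivation of it.
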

\begin{proof}
- Case 1. Suppose $m_1 = m_2 = m/2$. $\mathcal{C}_m$ is generated by one vector $(a_1+a_2)$, so by Theorem \ref{lemma1}, $d_{RC}^{\, m} = 2\min\{m,n-m \}$. Assume that $m < n/2$. Then the codewords generated by $(a_1)$ or $(a_2)$ can have $2m_1 = m$ positions of $x_i$'s satisfying $x_i \ne x_{n-i}$. Thus $d_{RC}^{\, m_1} = 2m_1 = m$.

Now assume that $n/2 \le m$. Let $\sigma_1(a_1) = (1 \ \cdots \ 1 \ 0 \cdots \ 0)$. By the assumption $\sigma_1(a_2)$ has to form that $\sigma_1(a_2) = (0 \ \cdots \ 0 \ x_{m_1+1} \cdots \ x_n)$ where $x_i \in GF(2)$.

Let $n$ be even. To avoid the coincidence, we should take $x_i$'s such that $x_{m_1+1} = \cdots = x_{n/2}=1$. Locate the rest of ones $x_{n-2m_1+ n/2 +1} = \cdots = x_n = 1$. Then $d_H(\sigma_1(a_1),({\sigma_1(a_2)}^{R})) = 2(n/2 - m_1) = n-m$, $d_H(\sigma_1(a_1),({\sigma_1(a_1)}^{R})) = d_H(\sigma_1(a_2),({\sigma_1(a_2)}^{R})) = 2m_1 = m$. Since $m \le n$, so $d_{RC}^{\, m_1} = n-m$.

Next, let $n$ be odd. If we take the same progress as the $n$ even case, we can get $d_H(\sigma_1(a_1),({\sigma_1(a_2)}^{R})) = 2(\lfloor n/2 \rfloor - m_1) = 2\lfloor n/2 \rfloor - m$. However, since $n$ is odd, we can let $x_{\lfloor n/2 \rfloor +1} = 1$, which is in $\sigma_1(a_2)$. In that case, $d_H(\sigma_1(a_1),({\sigma_1(a_2)}^{R})) = 2\lfloor n/2 \rfloor - 2m + 2$, $d_H(\sigma_1(a_1),({\sigma_1(a_1)}^{R})) = m$, $d_H(\sigma_1(a_2),({\sigma_1(a_2)}^{R})) = m-2$. Since $n/2 < m$, so $\lfloor n/2 \rfloor + 1\le m$. Then $2\lfloor n/2 \rfloor + 4 \le 2m+2 \le 3m$ (since $2 \le m$).

Therefore $d_{RC}^{\, m_1} = 2\lfloor n/2 \rfloor - m + 2$. Then $d_{RC}^{\, m_1}$ can be formed as $d_{RC}^{\, m_1} = 2(n - \lfloor n/2 \rfloor) - m$. Thus
\begin{equation*}
d_{RC}^{m_1}=
\begin{cases}
m, & \mbox{if } m < n/2 \\
2(n - \lfloor n/2 \rfloor) - m, & \mbox{if } m \ge n/2
\end{cases}
\end{equation*}

If $m < n/2$, then $m < n-m \le 2(n - \lfloor n/2 \rfloor) - m$. Thus $d_{RC}^{\, m_1} = \min\{m, 2(n - \lfloor n/2 \rfloor) - m \}$.

- Case 2. Suppose $m_1 \ne m_2$. Then the subcode with fixed $GC$-content constraint $m_1$ is generated by one vector $a_1$. So by Theorem \ref{lemma1}, $d_{RC}^{\, m_1} = 2\min\{m_1,n-m_1 \}$. In the same argument, we can get the following results: $d_{RC}^{\, m_2} = 2\min\{m_2,n-m_2 \}$, and $\mathcal{C}_m$ is generated by one vector $(a_1+a_2)$, so $d_{RC}^{\, m} = 2\min\{m,n-m \}$.
\end{proof}

\begin{theorem}
Let $\mathcal{C}$ be a QSD code over $E$ satisfying
\begin{equation*}
res(\mathcal{C}) = \left\langle \begin{pmatrix} a_1 \\ a_2 \end{pmatrix} \right\rangle = \left\langle \begin{pmatrix} 1 \cdots 1 & 0 \cdots 0 & 1 \cdots 1 & 0 \cdots 0 \\ 0 \cdots 0 & 1 \cdots 1 & 1 \cdots 1 & 0 \cdots 0 \end{pmatrix} \right\rangle
\end{equation*}
where  $d_H(a_1)=m_1+m_3$, $d_H(a_2)=m_2+m_3$ and $d_H(a_1 \cap a_2) = m_3$ ($m_1$, $m_2$ and $m_3$ are positive even integers). Then the following holds.
\begin{enumerate}
\item If $m_1,m_2$ and $m_3$ are all distinct, then $d_{RC}^{\, m_i+m_j} = 2\min\{m_i+m_j,n-(m_i+m_j) \}$ for all $1 \le i \ne j \le 3$.

\item Without loss of generality, let $m_1=m_2\ne m_3$. Then $d_{RC}^{\, m_1+m_2} = d_{RC}^{\, 2m_1}  = 2\min\{2m_1,n-2m_1 \}$ and $d_{RC}^{\, m_1+m_3} = d_{RC}^{\, m_2+m_3}$ is
\begin{equation*}
d_{RC}^{\, m_1+m_3} = \begin{cases}
2(m_1+m_3) & \mbox{if } 2m_1+m_3 < n/2 \\
n-2m_1-\delta_n & \mbox{if } n/2 \le 2m_1+m_3 < n/2+m_3 \\
2(\lfloor n/2 \rfloor-m_1) & \mbox{if } n/2 \le 2m_1
\end{cases}.
\end{equation*}

\item If $m_1 = m_2 = m_3$, then $d_{RC}^{\, m_1+m_2} = d_{RC}^{\, m_2+m_3} = d_{RC}^{\, m_1+m_3}$ is
\begin{equation*}
d_{RC}^{\, m_1+m_2} = \begin{cases}
4m_1 & \mbox{if } m_1 < n/6 \\
n-2m_1-\delta_n & \mbox{if } n/6 \le m_1 < n/4 \\
2(\lfloor n/2 \rfloor-m_1) & \mbox{if } n/4 \le m_1
\end{cases}.
\end{equation*}
\end{enumerate}
where $\delta_n = \begin{cases}
0 & \mbox{if } n \equiv 0 \mod 4 \\
1 & \mbox{if } n \equiv 1 \mod 2 \\
2 & \mbox{if } n \equiv 2 \mod 4
\end{cases}$.
\end{theorem}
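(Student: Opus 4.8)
The plan is to reduce the computation of $d_{RC}^{\,w}$, for each $GC$-weight $w=m_i+m_j$ that occurs, to an extremal problem about the binary residue code, following the method already used in Theorem~\ref{lemma1}. By Theorems~\ref{Thm-3} and~\ref{Thm-4} we have $\mathcal{C}=a\,res(\mathcal{C})\oplus c\,tor(\mathcal{C})$ with $tor(\mathcal{C})=res(\mathcal{C})^{\perp}$, and since $\psi(au_\iota+cv_\iota)=u_\iota$, the $GC$-content of $au+cv$ is the Hamming weight $wt(u)$; hence $\mathcal{C}_w=\{au+cv:\ u\in res(\mathcal{C}),\ wt(u)=w,\ v\in res(\mathcal{C})^{\perp}\}$. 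Using $\mathrm{x}^{RC}=\mathrm{x}^{R}+(c\cdots c)$ and the bijectivity of $(s,t)\mapsto as+ct$, one gets, for $\mathrm{x}=a\tilde u+c\tilde v$, $\mathrm{y}=a\tilde u'+c\tilde v'$ (tildes denoting images under a chosen $\sigma\in S_n$),
\begin{equation*}
d_H(\mathrm{x}^{RC},\mathrm{y})=d_H(\tilde u^{R},\tilde u')+\#\{\iota:\ \tilde u^{R}_\iota=\tilde u'_\iota\ \text{and}\ \tilde v^{R}_\iota=\tilde v'_\iota\}.
\end{equation*}
As in Theorem~\ref{lemma1}, the second summand is non-negative and, taking $\tilde v'=0$ and $\tilde v$ equal to $1$ on $\tau(T)$, where $T=\{\iota:\tilde u^{R}_\iota=\tilde u'_\iota\}$ and $\tau$ is the coordinate reversal, it can be realized as $0$ inside $res(\sigma\mathcal{C})^{\perp}$ (the two defining parities are adjustable off $\tau(T)$, or hold automatically since $m_1,m_2,m_3$ are even). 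Hence $d_{\sigma(\mathcal{C}_w)}=\min\{d_H(\sigma(u)^{R},\sigma(u')):\ u,u'\in res(\mathcal{C}),\ wt(u)=wt(u')=w\}$, so that
\begin{equation*}
d_{RC}^{\,w}=\max_{\sigma\in S_n}\ \min_{u,u'}\ d_H(\sigma(u)^{R},\sigma(u')).
\end{equation*}

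Next I identify which codewords compete. The nonzero words of $res(\mathcal{C})=\langle a_1,a_2\rangle$ are $a_1,a_2,a_1+a_2$ of weights $m_1+m_3$, $m_2+m_3$, $m_1+m_2$. If $m_1,m_2,m_3$ are pairwise distinct these three weights are distinct, each $\mathcal{C}_{m_i+m_j}$ is cut out by a single word, and the stated formula is exactly Theorem~\ref{lemma1}; this proves (1). If $m_1=m_2\ne m_3$, only $a_1+a_2$ has weight $2m_1$, so $d_{RC}^{\,2m_1}=2\min\{2m_1,n-2m_1\}$ again by Theorem~\ref{lemma1}, while $a_1$ and $a_2$ both have weight $m_1+m_3$; using $d_H(\mathrm{x}^{R},\mathrm{y})=d_H(\mathrm{y}^{R},\mathrm{x})$ to merge the cross terms this gives $d_{RC}^{\,m_1+m_3}=\max_\sigma\min\{d_H(\sigma a_1^{R},\sigma a_1),\,d_H(\sigma a_2^{R},\sigma a_2),\,d_H(\sigma a_1^{R},\sigma a_2)\}$. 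If $m_1=m_2=m_3$, all three nonzero words have weight $2m_1$, so six quantities — three ``self'' distances and three ``cross'' distances — enter the inner minimum.

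The remaining content is an extremal block-layout problem. Choosing $\sigma$ is the same as choosing a partition of $\{1,\dots,n\}$ into $S_1,S_2,S_3,S_4$ with $|S_i|=m_i$ (and $m_4=n-m_1-m_2-m_3$); with $A=S_1\cup S_3$, $B=S_2\cup S_3$ one has $d_H(\mathbf 1_X^{R},\mathbf 1_Y)=|\tau(X)\triangle Y|=|X|+|Y|-2|\tau(X)\cap Y|$, which is a non-negative-integer linear functional of how the $\lfloor n/2\rfloor$ mirror pairs $\{\iota,\tau(\iota)\}$ (and the midpoint, when $n$ is odd) are coloured by $S_1,\dots,S_4$; one must maximize the minimum of the relevant functionals under the size constraints. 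I would establish matching upper and lower bounds in three regimes, the thresholds being where $A\cup B$ (size $2m_1+m_3$) and then $A,B$ themselves (size $m_1+m_3$) cease to fit in one half of the coordinates. In the smallest regime, placing all of $A\cup B$ inside $\{1,\dots,\lfloor n/2\rfloor\}$ makes each of $A,B,A\triangle B$ antisymmetric and each pair $(\tau X,Y)$ disjoint, so every competing distance equals $2(m_1+m_3)$ in case (2) and $4m_1$ in case (3). In the largest regime, forcing the self-distances to be large forces $A$ and $B$ to be near-transversals of the mirror pairs, which in turn caps every cross-distance at $2(\lfloor n/2\rfloor-m_1)$, and an explicit near-transversal layout attains it. The middle regime is the delicate one: there one balances the number of mirror pairs that are monochromatic inside $A$ against the cross-term loss, and the presence/parity of the midpoint and of half-filled pairs produces the correction $\delta_n\in\{0,1,2\}$ as a function of $n\bmod 4$, in agreement with the identity $2(n-\lfloor n/2\rfloor)-m=n-m+1$ for odd $n$ used in the preceding theorem.

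The main obstacle is precisely this middle-regime analysis together with pinning down $\delta_n$: the easy bound $d_H(\mathbf 1_X^{R},\mathbf 1_X)\le 2\min\{|X|,n-|X|\}$ is no longer tight once $A$ and $B$ are constrained simultaneously (they share $S_3$), so one needs a double-counting argument showing that demanding all self-distances large forces a rigid colouring structure, followed by a matching construction, all while carrying the parity bookkeeping (the midpoint when $n$ is odd, and the parity of the numbers of monochromatic versus split pairs when $n\equiv 2\bmod 4$). By contrast, the ``second summand vanishes'' step and the merging of the $(u,u')$ and $(u',u)$ cross distances are routine once the pattern of Theorem~\ref{lemma1} is in hand.
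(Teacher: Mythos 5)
Your reduction step is sound and essentially the paper's own (it is the argument of Theorem~\ref{lemma1} made more explicit): writing $\mathrm{x}=a\tilde u+c\tilde v$ and using $\mathrm{x}^{RC}=\mathrm{x}^R+(c\cdots c)$, the decomposition $d_H(\mathrm{x}^{RC},\mathrm{y})=d_H(\tilde u^R,\tilde u')+\#\{\iota:\tilde u^R_\iota=\tilde u'_\iota,\ \tilde v^R_\iota=\tilde v'_\iota\}$ is correct, and your identification of which residue words compete in each fixed-$GC$ subcode matches the paper; case (1) and the $d_{RC}^{\,2m_1}$ claim of case (2) do follow directly from Theorem~\ref{lemma1}, as the paper also observes. (Two small omissions at this level: the theorem's ``without loss of generality'' needs the paper's change of generators $a_3=a_1+a_2$ to reduce $m_1\ne m_2=m_3$ to $m_1=m_2\ne m_3$, which you do not address; and your claim that the torsion part can always be chosen to annihilate the second summand is asserted via ``adjustable off $\tau(T)$'' without checking the corner cases of the two parity constraints, though this is minor and the even weights do rescue the extreme case.)

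The genuine gap is that the quantitative core of the theorem is never proved. For the middle regime ($n/2\le 2m_1+m_3<n/2+m_3$) and the large regime ($n/2\le 2m_1$), you only describe what ``I would establish'' and explicitly defer the matching upper bound, the rigidity/double-counting argument, and the determination of $\delta_n$ as ``the main obstacle.'' But these are precisely the assertions that distinguish this theorem from Theorem~\ref{lemma1}: that the optimum drops to $n-2m_1-\delta_n$ with $\delta_n$ depending on $n\bmod 4$, and to $2(\lfloor n/2\rfloor-m_1)$ when $2m_1\ge n/2$. The paper's proof, sketchy as it is, does the concrete work here: it exhibits explicit layouts (e.g.\ filling $x_{2m_1+1}=\cdots=x_{n/2}=y_{2m_1+1}=\cdots=y_{n/2}=1$ and splitting the remaining $m_3-(n/2-2m_1)$ ones symmetrically about a midpoint to achieve $d_H(\sigma_1(a_1),{\sigma_1(a_2)}^R)=n-2m_1$ when $4\mid n$, with the stated losses of $2$ or $1$ when $n\equiv 2\bmod 4$ or $n$ is odd), together with the construction giving $2(\lfloor n/2\rfloor-m_1)$ in the third regime, and then argues (informally) that no permutation does better; case (3) is handled by the same method. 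Your proposal reproduces the framework but not these constructions nor any optimality argument, so the statements (2) and (3) remain unverified; as written this is a plan, not a proof.
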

\begin{proof}
- Case 1. If $m_1,m_2$ and $m_3$ are all distinct, then the subcodes which have fixed $GC$-content constraint are generated by one vector. So $d_{RC}^{\, m_i+m_j} = 2\min\{m_i+m_j,n-(m_i+m_j) \}$ is obvious.

- Case 2. Suppose $m_1 \ne m_2 = m_3$. Then this case is obviously the same as the case $m_1 = m_3 \ne m_2$. And let $a_3 = a_1+a_2$. Then $res(\mathcal{C})$ can be generated by $a_1$ and $a_3$. Since $a_3 = a_1+a_2$, so $d_H(a_3) = d_H(a_1) + d_H(a_2) - 2d_H(a_1 \cap a_2) = m_1+m_2$ and $d_H(a_1 \cap a_3) = m_1$. Therefore the case $m_1 \ne m_2 = m_3$ is the same as the case $m_1=m_2\ne m_3$.

So now suppose that $m_1=m_2\ne m_3$. Then only the code $a_1+a_2$ generates the codeword which has fixed $m_1+m_2$ $GC$-content. Thus $d_{RC}^{\, 2m_1} = 2\min\{2m_1,n-2m_1 \}$ is obvious.

If $m_1+m_2+m_3 = 2m_1+m_3 \le n/2$, then we can easily check that $d_{RC}^{\, m_1+m_3} = 2\min\{m_1+m_3,n-(m_1+m_3) \}$. Note that $2m_1+m_3 \le n/2$ implies that $2\min\{m_1+m_3,n-(m_1+m_3) \} = 2(m_1+m_3)$. So $d_{RC}^{\, m_1+m_3} = 2(m_1+m_3)$.

Assume that $m_1+m_2+m_3 = 2m_1+m_3 > n/2$ and $m_1+m_2 = 2m_1 < n/2$. Denote $\sigma_1(a_1) = (x_1 \cdots x_n)$ and $\sigma_2(a_1) = (y_1 \cdots y_n)$ where $x_i, y_i \in GF(2)$. Then we can let $x_1 = \cdots x_{m_1} = 1$, $x_{m_1+1} = \cdots = x_{2m_1} = 0$, $y_1  = \cdots = y_{m_1} = 0$, $y_{m_1+1} = \cdots = y_{2m_1} = 1$.

Now let consider $n \equiv 0 \mod 4$. To avoid the coincidence we should let $x_{2m_1+1} = \cdots = x_{n/2} = 1 = y_{2m_1+1} = \cdots = y_{n/2}$. Then rest 1's should be located in $x_{n/2+1}, \ldots, x_n$ and $y_{n/2+1}, \ldots, y_n$. Since $n \equiv 0 \mod 4$, the number of rest 1's is $m_3 - (n/2 - 2m_1)$ so it is even. If we let 1's to one side, the coincidence will be increasing. Thus we can take $x_{n-3m_1-m_3/2+n/4+1} = \cdots = x_{n-m_1+m_3/2-n/4} = 1$. Then the number of 1's is $m_3 - n/2 + 2m_1$ and the middle point is between $n-2m_1$ and $n-2m_1+1$. In this case $d_H(\sigma_1(a_1), {\sigma_1(a_2)}^R) = n-2m_1$. The other Hamming distance is not smaller than $n-2m_1$.

If $n \equiv 2 \mod 4$, then the number of 1's $m_3 - n/2 + 2m_1$ is not even so we cannot divide into half. So one side has more 1's, and then the minimum distance value is decreasing exactly 2. If $n \equiv 1 \mod 2$, then $x_{\lfloor n/2 \rfloor +1} = y_{\lfloor n/2 \rfloor +1} = 1$. Then the minimum distance value is decreasing exactly 1. Therefore $d_{RC}^{\, m_1+m_2} = n-2m_1-\delta_n$.

Lastly assume that $n/2 \le m_1+m_2 = 2m_1$. Denote $\sigma_1(a_1) = (x_1 \cdots x_n)$ and $\sigma_2(a_1) = (y_1 \cdots y_n)$. Let $x_1 = \cdots = x_{m_1} = 1$, $y_{\lfloor n/2 \rfloor + 1} = \cdots = y_{\lfloor n/2 \rfloor + m_1} = 1$. And let $x_{m_1+1} = \cdots = x_{m_1+m_3/2} = 1$, $y_{\lfloor n/2 \rfloor + m_1 + 1} = \cdots = y_{\lfloor n/2 \rfloor + m_1 = m_3/2} = 1$. Then $d_{RC}^{\, m_1+m_2} = 2 \times (m_3/2) + 2 \times (\lfloor n/2 \rfloor- m_1 -m_3/2) = 2(\lfloor n/2 \rfloor-m_1)$.

- Case 3. Assume that $m_1=m_2=m_3$. Then we can apply the same methodas the case 2.
\end{proof}

For example, let $n=5$ and $k=2$. It is easy to see that there is a unique binary self-orthogonal $[5,2,3]$ code $\mathcal B$ with generator matrix
$
 \begin{pmatrix}
1 & 1 & 0 & 0 & 0 \\
0 & 0 & 1 & 1 & 0 \\
\end{pmatrix}.
$
 This gives the residue code $res(\mathcal{C})= \left\langle \begin{pmatrix} 1 & 1 & 0 & 0 & 0 \\ 0 & 0 & 1 & 1 & 0 \end{pmatrix} \right\rangle$.
The dual of $\mathcal B$ is generated by
$
 \begin{pmatrix}
1 & 1 & 0 & 0 & 0 \\
0 & 0 & 1 & 1 & 0 \\
0 & 0 & 0 & 0 & 1 \\
\end{pmatrix}.
$
Thus the generator matrix for a QSD code $\mathcal C$ is
\[
 \begin{pmatrix}
a & a & 0 & 0 & 0 \\
0 & 0 & a & a & 0 \\
0 & 0 & 0 & 0 & c \\
\end{pmatrix}.
\]

 Then by the formula (Theorem 10), $d_{RC}^2=2$ and $d_{RC}^4=2$. See the table at the end of the paper.

 In general, we can calculate some $d_{RC}$ values. The table in the conclusion shows some proper value of $d_{RC}$ for each length and dimension of the residue codes. The tables of specific $d_{RC}$ values up to the classification of QSD DNA codes with $n \le 8$ are in the conclusion. The  Magma source code is available at J.-L. Kim's website~\cite{KimWeb}.

\section{Quasi self-dual DNA codes over $F$}

The ring $F$ is defined by
\begin{equation*}
F = \left\langle a,b \, | \, 2a=2b=0, a^2=a, b^2=b, ab=b, ba=a \right\rangle.
\end{equation*}
Thus its multiplication table is given as follows.
\begin{table}[h]
\begin{center}
\begin{tabular}{ c | c | c | c | c | c |}
$\times$ & 0 & $a$ & $b$ & $c$ \\
\hline
0 & 0 & 0 & 0 & 0 \\
\hline
$a$ & 0 & $a$ & $b$ & $c$ \\
\hline
$b$ & 0 & $a$ & $b$ & $c$ \\
\hline
$c$ & 0 & 0 & 0 & 0 \\
\hline
\end{tabular}
\caption{Multiplication table of the ring of the ring $F$}
\end{center}
\end{table}

The ring $E$ and $F$ are not isomorphic. Even though ${(x,y)}_E \ne {(x,y)}_F$ for inner products, we can define a QSD DNA code over the ring $F$ similarly. Let a linear $F$-code be a one-sided $F$-submodule of $F^n$.

\begin{defn}{\em
Let $x,y \in F^n$ where $x = (x_1,\cdots,x_n)$ and $y = (y_1,\cdots,y_n)$. Define an inner product of $x,y$ as $(x,y) = \sum x_iy_i$. Let $\mathcal{C}$ be a linear $F$-code.
\begin{enumerate}
\item The {\em right dual} $\mathcal{C}^{\perp_R}$ of $\mathcal{C}$ is the right module $\mathcal{C}^{\perp_R} = \{ y \in F^n  \, | \, \forall x \in \mathcal{C}, (x,y) = 0\}$.

\item The {\em left dual} $\mathcal{C}^{\perp_L}$ of $\mathcal{C}$ is the left module $\mathcal{C}^{\perp_L} = \{ y \in F^n  \, | \, \forall x \in \mathcal{C}, (y,x) = 0\}$.

\item The code $\mathcal{C}$ is {\em left self-dual} (resp. {\em right self-dual}) if $\mathcal{C} = \mathcal{C}^{\perp_L}$ (resp. $\mathcal{C} = \mathcal{C}^{\perp_R}$). And the code $\mathcal{C}$ is {\em self-dual} is it is both of its duals.

\item The code $\mathcal{C}$ is {\em self-orthogonal} if $\forall x,y \in \mathcal{C}, (x,y) = 0$. A {\em quasi self-dual (QSD) code} is a self-orthogonal code of size $2^n$.
\end{enumerate}
}
\end{defn}

Remark that ${(x,y)}_E \ne {(x,y)}_F$ as an inner product. However if $\mathcal{C}$ is QSD in the ring $E$, then so is in the ring $F$.

\begin{theorem}
Let $\mathcal{C}$ be a QSD code over the ring $E$. Then by a map $f:E \mapsto F$, $f(\mathcal{C})$ is a QSD code over ring $F$.
\end{theorem}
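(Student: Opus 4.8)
The plan is to take for $f$ the identity on the common underlying set and to show that being QSD survives the change of ring, \emph{even though $f$ is not a ring homomorphism}. First I would record two elementary facts read off directly from the multiplication tables of $E$ and $F$ above. (i) The rings $E$ and $F$ have the same additive group (characteristic $2$), so the map $f\colon E\to F$ with $f(0)=0,\,f(a)=a,\,f(b)=b,\,f(c)=c$, extended coordinatewise to $f\colon E^n\to F^n$, is an additive bijection; in particular $f(\mathcal{C})$ is literally the same subset $\mathcal{C}\subseteq E^n=F^n$, it is an additive subgroup of $F^n$, and $|f(\mathcal{C})|=|\mathcal{C}|=2^n$. (ii) In $F$, left multiplication by a fixed scalar is either the zero map (for scalar $0$ or $c$) or the identity map (for scalar $a$ or $b$); hence for every $\lambda\in F$ and every vector $x\in F^n$ we have $\lambda x\in\{0,x\}$.

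Granting (i) and (ii), $f(\mathcal{C})$ is a one-sided (left) $F$-submodule of $F^n$: it is an additive subgroup, and for $\lambda\in F$, $x\in f(\mathcal{C})$ we get $\lambda x\in\{0,x\}\subseteq f(\mathcal{C})$. The remaining point, self-orthogonality with respect to $(\cdot,\cdot)_F$, is where the noncommutativity enters, and the key observation is that $F$ is the opposite ring of $E$: comparing the two multiplication tables entrywise gives $s\cdot_F t=t\cdot_E s$ for all $s,t$ in the common four-element set. Consequently, for all $x,y\in f(\mathcal{C})=\mathcal{C}$,
\[
(f(x),f(y))_F=\sum_{i=1}^{n}x_i\cdot_F y_i=\sum_{i=1}^{n}y_i\cdot_E x_i=(y,x)_E .
\]
Because $\mathcal{C}$ is QSD over $E$, it is self-orthogonal in $E$, i.e.\ $(u,v)_E=0$ for \emph{all} ordered pairs $u,v\in\mathcal{C}$; applying this with $(u,v)=(y,x)$ gives $(f(x),f(y))_F=0$. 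Hence $f(\mathcal{C})$ is self-orthogonal over $F$, and together with $|f(\mathcal{C})|=2^n$ we conclude that $f(\mathcal{C})$ is a QSD code over $F$.

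I do not expect a real obstacle here; the argument is in essence a careful side-by-side reading of the two multiplication tables. The only delicate points to handle cleanly in the write-up are: (a) choosing the matching one-sided module structure --- since left multiplication in $F$ coincides with right multiplication in $E$, the natural module side on the $F$-side is the left one, mirroring the $E$-picture; and (b) making explicit that the hypothesis ``self-orthogonal'' is invoked in \emph{both} orders $(x,y)$ and $(y,x)$, which is precisely what lets the passage to the opposite ring go through despite noncommutativity. As an alternative, one could give a module-theoretic proof via Theorem~\ref{Thm-3}, writing $\mathcal{C}=a\,\mathrm{res}(\mathcal{C})\oplus c\,\mathrm{tor}(\mathcal{C})$ with $\mathrm{res}(\mathcal{C})\subseteq\mathrm{res}(\mathcal{C})^{\perp}=\mathrm{tor}(\mathcal{C})$ by Theorem~\ref{Thm-4}, and verifying the three QSD conditions over $F$ on a generating set; but the opposite-ring observation above is the shortest route.
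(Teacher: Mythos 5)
Your proof is correct, and it takes a genuinely different route from the paper's. The paper uses the same identification $f(a_E)=a_F$, $f(b_E)=b_F$, $f(c_E)=c_F$, but proves self-orthogonality over $F$ by splitting the sum $(x,y)_E=\sum_i x_i y_i$ according to coordinate types (whether each $x_i,y_i$ is $0$, $c$, or in $\{a,b\}$), then invoking $(y,y)_E=0$ to conclude that every codeword of $\mathcal{C}$ has an even number of $a$-coordinates and an even number of $b$-coordinates, and finally recombining these parity facts to see that $\sum_i f(x_i)f(y_i)=0$; it never isolates the structural reason the computation works. Your observation that the identity map is an anti-isomorphism, i.e. $s\cdot_F t = t\cdot_E s$ for all $s,t$ (so $F$ is the opposite ring of $E$), replaces that bookkeeping with the single identity $(x,y)_F=(y,x)_E$, after which self-orthogonality over $E$ --- which by definition covers both orders of every pair --- gives $(x,y)_F=0$ at once. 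You also make explicit two points the paper leaves implicit: that $f(\mathcal{C})$ is a one-sided $F$-submodule (immediate, since left multiplication by any fixed scalar of $F$ acts on $F^n$ as either the zero map or the identity), and that $|f(\mathcal{C})|=|\mathcal{C}|=2^n$. What the paper's computation buys is the intermediate fact about even $a$- and $b$-content of codewords, which has some independent interest; what yours buys is brevity, a tighter argument, and a conceptual explanation (passage to the opposite ring) of why the QSD property transfers even though $f$ is not a ring homomorphism.
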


\begin{proof}
Define a bijection $f:E \mapsto F$ by $f(a_E) = a_F$, $f(b_E) = b_F$ and $f(c_E) = c_F$. Let $\mathcal{C}$ be a QSD code over the ring $E$. Take $x,y \in \mathcal{C}$, denoted by $x=(x_1,\ldots,x_n)$ and $y=(y_1,\ldots,y_n)$. Then ${(x,y)}_E = \sum_{i=1}^n (x_i,y_i) =0$ implies that
\begin{center}
$\displaystyle \sum_{x_m,y_m = c} (x_m,y_m) +\sum_{x_{m_1} \ne 0 \mbox{ nor } c} (x_{m_1}, c) + \sum_{y_{n_1} \ne 0 \mbox{ nor } c} (c, y_{n_1}) + \sum_{\substack{x_{m_2} \ne 0 \mbox{ nor } c \\ y_{n_2} \ne 0 \mbox{ nor } c}} (x_{m_2}, y_{n_2})$

$\displaystyle = \sum_{y_{n_1} \ne 0 \mbox{ nor } c} (c, y_{n_1}) + \sum_{\substack{x_{m_2} \ne 0 \mbox{ nor } c \\ y_{n_2} \ne 0 \mbox{ nor } c}} (x_{m_2}, y_{n_2}) = 0$
\end{center}
Note that there are only $a$ terms or $b$ terms in the summation
\begin{equation*}
sum_{\substack{x_{m_2} \ne 0 \mbox{ nor } c \\ y_{n_2} \ne 0 \mbox{ nor } c}} (x_{m_2}, y_{n_2}).
\end{equation*}
Therefore the number of coordinates of $a$'s and $b$'s in $y$ is even. If the number of $a$'s in odd, then $(y,y)\ne 0$. So both the number of $a$'s and the number of $b$'s are even. Hence every codeword in $C$ has even $a$-positions and $b$-positions. Then
\begin{equation*}
\sum_{i=1}^n (f(x_i),f(y_i)) = 0
\end{equation*}
since
\begin{equation*}
\sum_{x_{m} \ne 0 \mbox{ nor } c} (f(x_{m}), f(c)) + \sum_{\substack{x_{m} \ne 0, c \\ y_{n} \ne 0 \mbox{ nor } c}} (f(x_{m}), f(y_{n})) = 0
\end{equation*}.
\end{proof}

So we can regard an QSD code over the ring $E$ as an QSD code over the ring $F$.

\begin{defn}{\em
Let $\mathcal{C}$ be a code of length $n$ over $F$.
\begin{enumerate}
\item The {\em residue code} of $\mathcal{C}$ is $res(\mathcal{C}) = \{ \psi(y) \, | \, y \in \mathcal{C}\}$.

\item The {\em torsion code} of $\mathcal{C}$ is $tor(\mathcal{C}) = \{ x \in {GF(2)}^n \, | \, cx \in \mathcal{C}\}$
\end{enumerate}
where $\psi:F \to GF(2)$ is the map $\psi(0)=\psi(c)=0$ and $\psi(a)=\psi(b)=1$, or $\psi(x) = xa$. The map $\psi(x)=xa$ has an image $\{0,c\} \cong GF(2)$ so this map $\psi$ is well-defined.
}
\end{defn}

\begin{lemma}
Every element $f \in F$ can be written
\begin{equation*}
f = as + ct
\end{equation*}
where $s,t \in GF(2)$.
\end{lemma}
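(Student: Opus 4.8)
The plan is to reduce the statement to a four-element verification, exactly parallel to the decomposition $e=as+ct$ used for the ring $E$ in Section 3. First I would pin down the additive structure: by the characteristic-$2$ addition table (Table 2, right-hand side), $F=\{0,a,b,c\}$ is an elementary abelian $2$-group with $a+b=c$, $a+c=b$, and $b+c=a$; in particular $\mathrm{char}(F)=2$, so $F$ carries the natural $GF(2)$-module structure $0\cdot x=0$, $1\cdot x=x$ for $x\in F$. Under this action the expression $as+ct$ is meaningful for $s,t\in GF(2)$, and the lemma asks for surjectivity of $(s,t)\mapsto as+ct$ onto $F$.

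Next I would simply enumerate the four values. For $(s,t)=(0,0),(1,0),(0,1),(1,1)$ one gets $0$, $a$, $c$, and $a+c=b$ respectively, the last equality being read directly off the addition table. Hence $\{\,as+ct:s,t\in GF(2)\,\}=\{0,a,b,c\}=F$, so every $f\in F$ can be written in the claimed form. Since $(s,t)\mapsto as+ct$ is a surjection between two sets of size $4$, it is a bijection, so the representation $f=as+ct$ is actually unique — worth recording even though the statement only asks for existence.

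I would also remark, for context, that this is the visible shadow of a structural fact: like $E$, the ring $F$ is local with maximal ideal $J=\{0,c\}$ (indeed $F\cdot c=\{0,c\}$ and $c\cdot F=\{0\}$, so $J$ is a two-sided ideal, and $F/J$ has two elements, hence $F/J\cong GF(2)$), the image of $a$ generates $F/J$ over $GF(2)$, and $c$ spans $J$; combining a lift of the residue with an element of $J$ produces precisely $f=as+ct$. There is no genuine obstacle in the argument; the only points needing care are making explicit what the $GF(2)$-scalars mean and the identity $a+c=b$, both immediate from the tables already displayed.
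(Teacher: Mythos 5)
Your proof is correct, and it takes a slightly different (more self-contained) route than the paper. The paper disposes of this lemma in one line: since $F$ is isomorphic to $E$ as an additive group, the decomposition $f=as+ct$ already established for $E$ (via $E$ being local with maximal ideal $J=\{0,c\}$ and residue field $E/J\cong GF(2)$) carries over verbatim. You instead verify the claim directly: the statement involves only the additive structure and the natural $GF(2)$-scalar action coming from characteristic $2$, and the four values $as+ct$ for $(s,t)\in GF(2)^2$ are $0$, $a$, $c$, and $a+c=b$, which exhaust $F$; this also gives uniqueness of the representation, a point the paper leaves implicit. Your closing remark that $F$ itself is ``local'' with $J=\{0,c\}$ and $F/J\cong GF(2)$ (checked via $F\cdot c=\{0,c\}$, $c\cdot F=\{0\}$) essentially reconstructs for $F$ the structural argument the paper only spelled out for $E$, so in effect you prove the lemma from scratch rather than by transfer; the paper's appeal to the additive isomorphism is shorter, while your enumeration is independent of the earlier $E$ discussion and makes the $GF(2)$-action and the identity $a+c=b$ explicit. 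No gaps.
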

Since the ring $F$ is isomorphic to the ring $E$ as additive group, $F$ also has this decomposition.

\begin{cor}
If $\mathcal{C}$ is a QSD code over $F$, then $C = a\, res(\mathcal{C}) \oplus c\, tor(\mathcal{C})$ as modules.
\end{cor}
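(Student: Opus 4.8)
The plan is to mirror the proof of Theorem~\ref{Thm-3}, replacing the multiplication table of $E$ by that of $F$ given above and using the decomposition lemma just stated. Since the residue map on $F$ is $\psi(x)=xa$, I would first record that the natural setting is a \emph{right} $F$-submodule $\mathcal{C}\subseteq F^n$, so that $ya\in\mathcal{C}$ (and $yc\in\mathcal{C}$) for every $y\in\mathcal{C}$.

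\emph{Easy inclusion.} From the multiplication table of $F$ one has $aa=ba=a$ and $0a=ca=0$, hence coordinatewise $ya=a\,\psi(y)$; in particular $a\,\psi(y)\in\mathcal{C}$ for all $y\in\mathcal{C}$, i.e.\ $a\,res(\mathcal{C})\subseteq\mathcal{C}$. Since $c\,tor(\mathcal{C})\subseteq\mathcal{C}$ holds directly by the definition of the torsion code, we obtain $a\,res(\mathcal{C})+c\,tor(\mathcal{C})\subseteq\mathcal{C}$.

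\emph{Reverse inclusion.} Given $y\in\mathcal{C}$, apply the decomposition lemma coordinatewise to write $y_i=as_i+ct_i$ with $s_i,t_i\in GF(2)$, and set $u=(s_1,\dots,s_n)$, $v=(t_1,\dots,t_n)$. Checking the four cases ($0=a\cdot0+c\cdot0$, $a=a\cdot1+c\cdot0$, $c=a\cdot0+c\cdot1$, $b=a\cdot1+c\cdot1$) shows $s_i=1$ exactly when $y_i\in\{a,b\}$, so $u=\psi(y)\in res(\mathcal{C})$ and $y=au+cv$. By the easy inclusion $au\in\mathcal{C}$, hence $cv=y-au\in\mathcal{C}$, which is precisely the statement $v\in tor(\mathcal{C})$. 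Thus $y\in a\,res(\mathcal{C})+c\,tor(\mathcal{C})$, giving the reverse inclusion and therefore equality of the two sets.

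\emph{Directness.} Vectors in $a\,res(\mathcal{C})$ have all entries in $\{0,a\}$ and vectors in $c\,tor(\mathcal{C})$ have all entries in $\{0,c\}$, so these two subgroups of $(F^n,+)$ meet only in $0$; equivalently $(s,t)\mapsto as+ct$ is a bijection $GF(2)^2\to F$ (its image is $\{0,a,c,a+c\}=\{0,a,b,c\}$), so the representation $y=au+cv$ is unique. Combining the steps yields $\mathcal{C}=a\,res(\mathcal{C})\oplus c\,tor(\mathcal{C})$, which is the asserted decomposition ``as modules'' (it is really a direct sum of abelian groups). The only point demanding care is the noncommutative bookkeeping — one must consistently use right multiplication and verify that $ya$ collapses to $a\,\psi(y)$ and not to something carrying $b$-entries — but this is a finite check against the table; as in the $E$ case, the QSD/self-orthogonality hypothesis is not actually used in this structural step.
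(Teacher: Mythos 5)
Your argument is correct and is exactly the argument the paper intends but omits: the paper only states the element decomposition $f=as+ct$ and appeals to the analogy with Theorem~\ref{Thm-3} over $E$, while you carry out that mirrored proof explicitly, and your choice of the \emph{right}-module convention is the right one since $F$ is the opposite ring of $E$ and the paper's residue map is $\psi(x)=xa$. The only caveat worth recording is that your closing remark (self-orthogonality unused) is tied to that right-module reading; for a merely left ``linear'' $F$-code the closure step $ya\in\mathcal{C}$ would instead have to be extracted from the QSD hypothesis, but this does not affect the correctness of your proof as written.
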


\begin{cor}
Let $N(n,k_1)$ be the number of inequivalent QSD codes over $F$ where $n$ is the length and $k_1$ is the dimension of their residue codes. Then
\begin{equation*}
N(n,k_1) = \Psi(n,k_1)
\end{equation*}
where $\Psi(n,k_1)$ is the number of inequivalent binary self-orthogonal codes.
\end{cor}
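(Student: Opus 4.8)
The plan is to deduce Corollary~14 from the classification over $E$ by transporting everything along the bijection $f$ of this section, rather than redoing the invariant-theoretic bookkeeping from scratch. The first step is to note that $f\colon E\to F$ (with $0\mapsto 0$, $a\mapsto a$, $b\mapsto b$, $c\mapsto c$) is an isomorphism of additive groups, so it extends coordinatewise to an additive bijection $E^{n}\to F^{n}$, still written $f$, which commutes with every coordinate permutation: $f(\sigma(\mathcal{C}))=\sigma(f(\mathcal{C}))$ for all $\sigma\in S_{n}$. By the theorem of this section, $f$ carries every QSD code over $E$ to a QSD code over $F$. The converse also holds: since $F$ is the opposite ring of $E$ one has $(x,y)_{F}=(y,x)_{E}$, so ``self-orthogonal of size $2^{n}$'' is literally the same condition on a subset of $\{0,a,b,c\}^{n}$ for both rings, and the decomposition $\mathcal{C}=a\,res(\mathcal{C})\oplus c\,tor(\mathcal{C})$ — available over $E$ by Theorem~\ref{Thm-3} and over $F$ by the preceding corollary — shows that the underlying set of a QSD code over one ring is simultaneously a one-sided submodule over the other. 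Hence $\mathcal{C}\mapsto f(\mathcal{C})$ is a bijection from the QSD $E$-codes of length $n$ onto the QSD $F$-codes of length $n$.

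The second step is to verify that this bijection respects the two data that enter the count. Because the two reduction maps agree under $f$ — both send $0,c$ to $0$ and $a,b$ to $1$ — we get $res(f(\mathcal{C}))=res(\mathcal{C})$ as binary codes, so $f$ preserves the residue dimension $k_{1}$. Because $f$ commutes with coordinate permutations, two QSD codes over $E$ are permutation equivalent if and only if their images under $f$ are permutation equivalent over $F$. Therefore $f$ induces a bijection between the inequivalent QSD $E$-codes with residue dimension $k_{1}$ and the inequivalent QSD $F$-codes with residue dimension $k_{1}$; writing $N_{E}(n,k_{1})$ and $N_{F}(n,k_{1})$ for these counts, we obtain $N_{F}(n,k_{1})=N_{E}(n,k_{1})$, and the analogous corollary over $E$ gives $N_{E}(n,k_{1})=\Psi(n,k_{1})$, which is the claim.

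An alternative, self-contained route is to mirror the proof over $E$ directly: first establish the $F$-analogues of $res(\mathcal{C})\subseteq res(\mathcal{C})^{\perp}$ and $tor(\mathcal{C})=res(\mathcal{C})^{\perp}$ (the $F$-version of Theorem~\ref{Thm-4}), and of the construction sending a self-orthogonal binary code $\mathcal B$ to the QSD code $a\mathcal B\oplus c\mathcal B^{\perp}$ over $F$ — the proofs are word for word the same once one uses $a\cdot_{F}1=a$, $c\cdot_{F}1=c$ and the multiplication table of $F$; then conclude, exactly as over $E$, that two QSD $F$-codes are permutation equivalent precisely when their residue codes are, so the classification of QSD $F$-codes with residue dimension $k_{1}$ coincides with the classification of self-orthogonal binary $[n,k_{1}]$ codes, of which there are $\Psi(n,k_{1})$.

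The step I expect to require the most care is the justification that the set-level identification of $E$ with $F$ really respects all the relevant structure: the inner products differ, the left and right module multiplications differ, and one must be certain that being a one-sided submodule, being self-orthogonal of size $2^{n}$, and being permutation inequivalent are unchanged when $E$ is replaced by $F$. Once the decomposition $\mathcal{C}=a\,res(\mathcal{C})\oplus c\,tor(\mathcal{C})$ is in hand this becomes routine, because it reduces every assertion about the module $\mathcal{C}$ to an assertion about the pair of binary codes $(res(\mathcal{C}),tor(\mathcal{C}))$, on which neither ring structure has any effect.
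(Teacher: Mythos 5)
Your proposal is correct, and it in fact supplies more justification than the paper does: the paper states this corollary without any proof, leaving it as an immediate consequence of the preceding $F$-analogues (the lemma $f=as+ct$, the decomposition $\mathcal{C}=a\,res(\mathcal{C})\oplus c\,tor(\mathcal{C})$, and the remark that a QSD code over $E$ can be regarded as a QSD code over $F$), with the real work done only once, in the proof of the corresponding corollary over $E$. Your second, ``mirror the $E$-proof verbatim'' route is exactly the argument the paper implicitly intends. Your first route is a slight variant that the paper only gestures at via its Theorem on the map $f$: you make it a genuine two-sided transport by observing that $F$ is the opposite ring of $E$, so $(x,y)_F=(y,x)_E$, self-orthogonality of size $2^n$ is literally the same condition on a subset of $\{0,a,b,c\}^n$, a one-sided submodule over one ring is a one-sided submodule over the other, the two reduction maps agree (both kill $\{0,c\}$ and send $\{a,b\}$ to $1$), and $f$ commutes with coordinate permutations; this yields $N_F(n,k_1)=N_E(n,k_1)=\Psi(n,k_1)$ cleanly. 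The one place to be slightly careful, which you flag yourself, is that transporting the module structure needs either the opposite-ring identity directly or the containment $res(\mathcal{C})\subseteq tor(\mathcal{C})$ over $F$ (the $F$-analogue of the paper's Theorem~\ref{Thm-4}), which the paper never states explicitly; since you invoke the opposite-ring observation, your argument is complete, and arguably tighter than what the paper offers.
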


\begin{cor}
Let $\mathcal{C}$ be a QSD code over $F$. Then
\begin{equation*}
GCW_\mathcal{C}(x,y) = \sum_{i=0}^n 2^{n-k_1} A_i(res(\mathcal{C})) x^i y^{n-i}
\end{equation*}
where $n = | \, \mathcal{C} \, |, k_1=dim(res(\mathcal{C}))$ and $A_i(res(\mathcal{C}))$ is the binary weight distribution of $res(C)$.
\end{cor}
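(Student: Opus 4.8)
The plan is to mirror the proof of Theorem~\ref{thm-GC}, but since the joint-weight-enumerator identity is only stated for codes over $E$, I would argue directly from the module decomposition $\mathcal{C} = a\,res(\mathcal{C}) \oplus c\,tor(\mathcal{C})$ that was just established (in the preceding corollary) for QSD codes over $F$. Concretely, the key observation is that the $GC$-content of a codeword of $\mathcal{C}$ depends only on its residue part, so $GCW_\mathcal{C}$ is determined by the Hamming weight enumerator of $res(\mathcal{C})$ up to the multiplicative factor $|tor(\mathcal{C})|$.

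First I would set up the bookkeeping: by the decomposition, every $u \in \mathcal{C}$ is written uniquely as $u = a\bar r + c\bar t$ with $\bar r \in res(\mathcal{C})$, $\bar t \in tor(\mathcal{C})$, where $a\bar r$ and $c\bar t$ are the coordinatewise $GF(2)$-scalings. Using the addition table of $F$ (characteristic $2$), one checks coordinatewise that $a r_i + c t_i \in \{0,c\}$ when $r_i = 0$ and $a r_i + c t_i \in \{a,\, a+c\} = \{a,b\}$ when $r_i = 1$. Since $\psi(0) = \psi(c) = 0$ and $\psi(a) = \psi(b) = 1$, this gives $\psi(u) = \bar r$, so the $GC$-content of $u$ equals $wt(\bar r)$, independently of $\bar t$. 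Summing over the decomposition,
\begin{equation*}
GCW_\mathcal{C}(x,y) = \sum_{\bar r \in res(\mathcal{C})}\ \sum_{\bar t \in tor(\mathcal{C})} x^{wt(\bar r)} y^{\,n-wt(\bar r)} = |tor(\mathcal{C})|\cdot W_{res(\mathcal{C})}(x,y).
\end{equation*}

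It then remains to identify $|tor(\mathcal{C})| = 2^{n-k_1}$. Because $\bar r \mapsto a\bar r$ and $\bar t \mapsto c\bar t$ are injective on $GF(2)^n$ (as $a \neq 0$ and $c \neq 0$), the direct sum gives $|\mathcal{C}| = 2^{k_1}\,|tor(\mathcal{C})|$; since $\mathcal{C}$ is QSD we have $|\mathcal{C}| = 2^n$, hence $|tor(\mathcal{C})| = 2^{n-k_1}$. Substituting and expanding $W_{res(\mathcal{C})}(x,y) = \sum_i A_i(res(\mathcal{C}))\, x^i y^{n-i}$ yields the stated formula.

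I do not expect a genuine obstacle here; the only point requiring care is the coordinatewise arithmetic in $F$, where $ab = b$ and $ba = a$ differ from the corresponding products in $E$. However, the $GC$-weight enumerator only sees the map $\psi$, which collapses $\{0,c\}$ and $\{a,b\}$, so the computations that actually matter — $a + c = b$, together with the scalar action $a\cdot 1 = a$ and $c\cdot 1 = c$ — are exactly as in the $E$ case, and the formula comes out identical. (Alternatively, one could first establish the $F$-analogue of the complete-weight/joint-weight identity for QSD $F$-codes and then repeat the MacWilliams argument of Theorem~\ref{thm-GC} verbatim; the direct route above avoids that detour.)
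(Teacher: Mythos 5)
Your proof is correct, but it takes a different route from the one the paper implies. The paper states this corollary without its own proof, the intended argument being the one used for Theorem~\ref{thm-GC} over $E$: there the GC enumerator is obtained by specializing the complete weight enumerator, which in turn is identified with the joint weight enumerator $J(res(\mathcal{C}),tor(\mathcal{C}))$, and then MacWilliams' identity $J(A,B)(x,x,y,y)=|B|\,W_A(x,y)$ is invoked (for $F$ this tacitly requires the $F$-analogue of the CWE--joint-weight theorem, which the paper never states, or else a transfer through the symbol bijection $E\to F$ of Theorem 12). You instead argue directly from the decomposition $\mathcal{C}=a\,res(\mathcal{C})\oplus c\,tor(\mathcal{C})$: the coordinatewise check that $a r_i+c t_i$ lies in $\{0,c\}$ or $\{a,b\}$ according as $r_i=0$ or $1$ shows $\psi(u)=\bar r$, so the GC-content of a codeword equals $wt(\bar r)$ independently of the torsion part, and summing over the $2^{n-k_1}$ torsion choices (where $|tor(\mathcal{C})|=2^{n-k_1}$ follows from $|\mathcal{C}|=2^n$ and the injectivity of $\bar r\mapsto a\bar r$, $\bar t\mapsto c\bar t$) gives $2^{n-k_1}W_{res(\mathcal{C})}(x,y)$. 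Your route is more elementary and self-contained: it needs only the module decomposition already established for QSD $F$-codes and sidesteps the joint-weight-enumerator machinery (including the slightly loose $1/\dim(tor)$ normalization in the paper's proof of Theorem~\ref{thm-GC}); what the paper's approach buys in exchange is the full complete weight enumerator, of which the GC enumerator is just one specialization. The one point you rightly flag, that only the additive structure $a+c=b$ and the scalar action matter so the noncommutativity of $F$ is irrelevant here, is exactly what makes the direct argument go through.
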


Therefore we can check $F$ has the same $GC$-weight distribution over $E$.

\section{Conclusion}

In this paper, we construct QSD DNA codes over $E$. For each DNA code, the $GC$-weight enumerator is obtained. This implies the (nonlinear) subcodes which have a fixed $GC$-content. Especially some minimum distances with reverse complement constraint in the ring $E$ are calculated for $n\le 8$. The tables of $d_{RC}$ are below. Some values of $d_{RC}$ are computed by MAGMA programming. The QSD DNA codes over the ring $F$ is almost same as the case of the ring $E$, so we can apply the below tables.

\begin{table}
\begin{adjustbox}{angle=90}
\resizebox{7.2cm}{!}{
\begin{tabular}{ | c | c | c | c | c | }
\hline
$n$ & $k_1$ & \centering Residue code & \centering Generator Matrix & $d_{RC}^m$ \\
\hline
$n$ & 0 & \centering $\left\langle \begin{pmatrix} 0 & \cdots & 0\end{pmatrix} \right\rangle$ & \centering $\begin{pmatrix} c & \cdots & c\end{pmatrix}$ & $d_{RC}^0=0$ \\
\hline
2 & 1 & \centering $\left\langle \begin{pmatrix} 1 & 1 \end{pmatrix} \right\rangle$ & \centering $\begin{pmatrix} a & a \end{pmatrix}$ & $d_{RC}^2=0$ \\
\hline
3 & 1 & \centering $\left\langle \begin{pmatrix} 1 & 1 & 0 \end{pmatrix} \right\rangle$ & \centering $\begin{pmatrix} a & a & 0 \\ 0 & 0 & c \end{pmatrix}$ & $d_{RC}^2=2$ \\
\hline
4 & 1 & \centering $\left\langle \begin{pmatrix} 1 & 1 & 0 & 0 \end{pmatrix} \right\rangle$ & \centering $\begin{pmatrix} a & a & 0 & 0 \\ 0 & 0 & c & 0 \\ 0 & 0 & 0 & c \end{pmatrix}$ & $d_{RC}^2=4$ \\
\hline
4 & 1 & \centering $\left\langle \begin{pmatrix} 1 & 1 & 1 & 1 \end{pmatrix} \right\rangle$ & \centering $\begin{pmatrix} a & a & a & a \\ c & c & 0 & 0 \\ c & 0 & c & 0 \end{pmatrix}$ & $d_{RC}^4=0$ \\
\hline
4 & 2 & \centering $\left\langle \begin{pmatrix} 1 & 1 & 0 & 0 \\ 0 & 0 & 1 & 1 \end{pmatrix} \right\rangle$ & \centering $\begin{pmatrix} a & a & 0 & 0 \\ 0 & 0 & a & a \end{pmatrix}$ & $\begin{matrix} d_{RC}^2=0, \\ d_{RC}^4=0 \end{matrix}$ \\
\hline
5 & 1 & \centering $\left\langle \begin{pmatrix} 1 & 1 & 0 & 0 & 0 \end{pmatrix} \right\rangle$ & \centering $\begin{pmatrix} a & a & 0 & 0 & 0 \\ 0 & 0 & c & 0 & 0 \\ 0 & 0 & 0 & c & 0 \\ 0 & 0 & 0 & 0 & c \end{pmatrix}$ & $d_{RC}^2=4$ \\
\hline
5 & 1 & \centering $\left\langle \begin{pmatrix} 1 & 1 & 1 & 1 & 0 \end{pmatrix} \right\rangle$ & \centering $\begin{pmatrix} a & a & a & a & 0 \\ c & c & 0 & 0 & 0 \\ c & 0 & c & 0 & 0 \\ 0 & 0 & 0 & 0 & c \end{pmatrix}$ & $d_{RC}^4=2$ \\
\hline
5 & 2 & \centering $\left\langle \begin{pmatrix} 1 & 1 & 0 & 0 & 0 \\ 0 & 0 & 1 & 1 & 0 \end{pmatrix} \right\rangle$ & \centering $\begin{pmatrix} a & a & 0 & 0 & 0 \\ 0 & 0 & a & a & 0 \\ 0 & 0 & 0 & 0 & c \end{pmatrix}$ & $\begin{matrix} d_{RC}^2=2, \\ d_{RC}^4=2 \end{matrix}$ \\
\hline
6 & 1 & \centering $\left\langle \begin{pmatrix} 1 & 1 & 0 & 0 & 0 & 0 \end{pmatrix} \right\rangle$ & \centering $\begin{pmatrix} a & a & 0 & 0 & 0 & 0 \\ 0 & 0 & c & 0 & 0 & 0 \\ 0 & 0 & 0 & c & 0 & 0 \\ 0 & 0 & 0 & 0 & c & 0 \\ 0 & 0 & 0 & 0 & 0 & c \end{pmatrix}$ & $d_{RC}^2=4$ \\
\hline
\end{tabular}
}
\resizebox{7.2cm}{!}{
\begin{tabular}{ | c | c | c | c | c | }
\hline
$n$ & $k_1$ & \centering Residue code & \centering Generator Matrix & $d_{RC}^m$ \\
\hline
6 & 1 & \centering $\left\langle \begin{pmatrix} 1 & 1 & 1 & 1 & 0 & 0 \end{pmatrix} \right\rangle$ & \centering $\begin{pmatrix} a & a & a & a & 0 & 0 \\ c & c & 0 & 0 & 0 & 0 \\ c & 0 & c & 0 & 0 & 0 \\ 0 & 0 & 0 & 0 & c & 0 \\ 0 & 0 & 0 & 0 & 0 & c \end{pmatrix}$ & $d_{RC}^4=4$ \\
\hline
6 & 1 & \centering $\left\langle \begin{pmatrix} 1 & 1 & 1 & 1 & 1 & 1 \end{pmatrix} \right\rangle$ & \centering $\begin{pmatrix} a & a & a & a & a & a \\ c & c & 0 & 0 & 0 & 0 \\ c & 0 & c & 0 & 0 & 0 \\ c & 0 & 0 & c & 0 & 0 \\ c & 0 & 0 & 0 & c & 0 \end{pmatrix}$ & $d_{RC}^6=0$ \\
\hline
6 & 2 & \centering $\left\langle \begin{pmatrix} 1 & 1 & 0 & 0 & 0 & 0 \\ 0 & 0 & 1 & 1 & 0 & 0 \end{pmatrix} \right\rangle$ & \centering $\begin{pmatrix} a & a & 0 & 0 & 0 & 0 \\ 0 & 0 & a & a & 0 & 0 \\ 0 & 0 & 0 & 0 & c & 0 \\ 0 & 0 & 0 & 0 & 0 & c \end{pmatrix}$ & $\begin{matrix} d_{RC}^2=2, \\ d_{RC}^4=4 \end{matrix}$ \\
\hline
6 & 2 & \centering $\left\langle \begin{pmatrix} 1 & 1 & 1 & 1 & 0 & 0 \\ 0 & 0 & 0 & 0 & 1 & 1 \end{pmatrix} \right\rangle$ & \centering $\begin{pmatrix} a & a & a & a & 0 & 0 \\ 0 & 0 & 0 & 0 & a & a \\ c & c & 0 & 0 & 0 & 0 \\ c & 0 & c & 0 & 0 & 0 \end{pmatrix}$ & $\begin{matrix} d_{RC}^2=4, \\ d_{RC}^4=4, \\ d_{RC}^6=0 \end{matrix}$ \\
\hline
6 & 2 & \centering $\left\langle \begin{pmatrix} 1 & 1 & 1 & 1 & 0 & 0 \\ 0 & 0 & 1 & 1 & 1 & 1 \end{pmatrix} \right\rangle$ & \centering $\begin{pmatrix} a & a & a & a & 0 & 0 \\ 0 & 0 & a & a & a & a \\ 0 & 0 & c & c & 0 & 0 \\ c & 0 & c & 0 & c & 0 \end{pmatrix}$ & $d_{RC}^4=4$ \\
\hline
6 & 2 & \centering $\left\langle \begin{pmatrix} 1 & 1 & 0 & 0 & 0 & 0 \\ 0 & 0 & 1 & 1 & 0 & 0 \\ 0 & 0 & 0 & 0 & 1 & 1 \end{pmatrix} \right\rangle$ & \centering $\begin{pmatrix} a & a & 0 & 0 & 0 & 0 \\ 0 & 0 & a & a & 0 & 0 \\ 0 & 0 & 0 & 0 & a & a \end{pmatrix}$ & $\begin{matrix} d_{RC}^2=2, \\ d_{RC}^4=2, \\ d_{RC}^6=0 \end{matrix}$ \\
\hline
7 & 1 & \centering $\left\langle \begin{pmatrix} 1 & 1 & 0 & 0 & 0 & 0 & 0 \end{pmatrix} \right\rangle$ & \centering $\begin{pmatrix} a & a & 0 & 0 & 0 & 0 & 0 \\ 0 & 0 & c & 0 & 0 & 0 & 0 \\ 0 & 0 & 0 & c & 0 & 0 & 0 \\ 0 & 0 & 0 & 0 & c & 0 & 0 \\ 0 & 0 & 0 & 0 & 0 & c & 0 \\ 0 & 0 & 0 & 0 & 0 & 0 & c \end{pmatrix}$ & $d_{RC}^2=4$ \\
\hline
\end{tabular}
}
\resizebox{7.2cm}{!}{
\begin{tabular}{ | c | c | c | c | c | }
\hline
$n$ & $k_1$ & \centering Residue code & \centering Generator Matrix & $d_{RC}^m$ \\
\hline
7 & 1 & \centering $\left\langle \begin{pmatrix} 1 & 1 & 1 & 1 & 0 & 0 & 0 \end{pmatrix} \right\rangle$ & \centering $\begin{pmatrix} a & a & a & a & 0 & 0 & 0 \\ c & c & 0 & 0 & 0 & 0 & 0 \\ c & 0 & c & 0 & 0 & 0 & 0 \\ 0 & 0 & 0 & 0 & c & 0 & 0 \\ 0 & 0 & 0 & 0 & 0 & c & 0 \\ 0 & 0 & 0 & 0 & 0 & 0 & c \end{pmatrix}$ & $d_{RC}^4=6$ \\
\hline
7 & 1 & \centering $\left\langle \begin{pmatrix} 1 & 1 & 1 & 1 & 1 & 1 & 0 \end{pmatrix} \right\rangle$ & \centering $\begin{pmatrix} a & a & a & a & a & a & 0 \\ c & c & 0 & 0 & 0 & 0 & 0 \\ c & 0 & c & 0 & 0 & 0 & 0 \\ c & 0 & 0 & c & 0 & 0 & 0 \\ c & 0 & 0 & 0 & c & 0 & 0 \\ 0 & 0 & 0 & 0 & 0 & 0 & c \end{pmatrix}$ & $d_{RC}^6=2$ \\
\hline
7 & 2 & \centering $\left\langle \begin{pmatrix} 1 & 1 & 0 & 0 & 0 & 0 & 0 \\ 0 & 0 & 1 & 1 & 0 & 0 & 0 \end{pmatrix} \right\rangle$ & \centering $\begin{pmatrix} a & a & 0 & 0 & 0 & 0 & 0 \\ 0 & 0 & a & a & 0 & 0 & 0 \\ 0 & 0 & 0 & 0 & c & 0 & 0 \\ 0 & 0 & 0 & 0 & 0 & c & 0 \\ 0 & 0 & 0 & 0 & 0 & 0 & c \end{pmatrix}$ & $\begin{matrix} d_{RC}^2=2, \\ d_{RC}^4=6 \end{matrix}$ \\
\hline
7 & 2 & \centering $\left\langle \begin{pmatrix} 1 & 1 & 1 & 1 & 0 & 0 & 0 \\ 0 & 0 & 0 & 0 & 1 & 1 & 0 \end{pmatrix} \right\rangle$ & \centering $\begin{pmatrix} a & a & a & a & 0 & 0 & 0 \\ 0 & 0 & 0 & 0 & a & a & 0 \\ c & c & 0 & 0 & 0 & 0 & 0 \\ c & 0 & c & 0 & 0 & 0 & 0 \\ 0 & 0 & 0 & 0 & 0 & 0 & c \end{pmatrix}$ & $\begin{matrix} d_{RC}^2=4, \\ d_{RC}^4=6, \\ d_{RC}^6=2 \end{matrix}$ \\
\hline
7 & 2 & \centering $\left\langle \begin{pmatrix} 1 & 1 & 1 & 1 & 0 & 0 & 0 \\ 0 & 0 & 1 & 1 & 1 & 1 & 0 \end{pmatrix} \right\rangle$ & \centering $\begin{pmatrix} a & a & a & a & 0 & 0 & 0 \\ 0 & 0 & a & a & a & a & 0 \\ 0 & 0 & c & c & 0 & 0 & 0 \\ c & 0 & c & 0 & c & 0 & 0 \\ 0 & 0 & 0 & 0 & 0 & 0 & c \end{pmatrix}$ & $d_{RC}^4=4$ \\
\hline
7 & 3 & \centering $\left\langle \begin{pmatrix} 1 & 1 & 0 & 0 & 0 & 0 & 0 \\ 0 & 0 & 1 & 1 & 0 & 0 & 0 \\ 0 & 0 & 0 & 0 & 1 & 1 & 0 \end{pmatrix} \right\rangle$ & \centering $\begin{pmatrix} a & a & 0 & 0 & 0 & 0 & 0 \\ 0 & 0 & a & a & 0 & 0 & 0 \\ 0 & 0 & 0 & 0 & a & a & 0 \\ 0 & 0 & 0 & 0 & 0 & 0 & c \end{pmatrix}$ & $\begin{matrix} d_{RC}^2=4, \\ d_{RC}^4=2, \\ d_{RC}^6=2 \end{matrix}$ \\
\hline
\end{tabular}
}
\end{adjustbox}
\end{table}

\begin{table}
\begin{adjustbox}{angle=90}
\resizebox{7.2cm}{!}{
\begin{tabular}{ | c | c | c | c | c | }
\hline
$n$ & $k_1$ & \centering Residue code & \centering Generator Matrix & $d_{RC}^m$ \\
\hline
8 & 1 & \centering $\left\langle \begin{pmatrix} 1 & 1 & 0 & 0 & 0 & 0 & 0 & 0 \end{pmatrix} \right\rangle$ & \centering $\begin{pmatrix} a & a & 0 & 0 & 0 & 0 & 0 & 0 \\ 0 & 0 & c & 0 & 0 & 0 & 0 & 0 \\ 0 & 0 & 0 & c & 0 & 0 & 0 & 0 \\ 0 & 0 & 0 & 0 & c & 0 & 0 & 0 \\ 0 & 0 & 0 & 0 & 0 & c & 0 & 0 \\ 0 & 0 & 0 & 0 & 0 & 0 & c & 0 \\ 0 & 0 & 0 & 0 & 0 & 0 & 0 & c \end{pmatrix}$ & $d_{RC}^2=4$ \\
\hline
8 & 1 & \centering $\left\langle \begin{pmatrix} 1 & 1 & 1 & 1 & 0 & 0 & 0 & 0 \end{pmatrix} \right\rangle$ & \centering $\begin{pmatrix} a & a & a & a & 0 & 0 & 0 & 0 \\ c & c & 0 & 0 & 0 & 0 & 0 & 0 \\ c & 0 & c & 0 & 0 & 0 & 0 & 0 \\ 0 & 0 & 0 & 0 & c & 0 & 0 & 0 \\ 0 & 0 & 0 & 0 & 0 & c & 0 & 0 \\ 0 & 0 & 0 & 0 & 0 & 0 & c & 0 \\ 0 & 0 & 0 & 0 & 0 & 0 & 0 & c \end{pmatrix}$ & $d_{RC}^4=8$ \\
\hline
8 & 1 & \centering $\left\langle \begin{pmatrix} 1 & 1 & 1 & 1 & 1 & 1 & 0 & 0 \end{pmatrix} \right\rangle$ & \centering $\begin{pmatrix} a & a & a & a & a & a & 0 & 0 \\ c & c & 0 & 0 & 0 & 0 & 0 & 0 \\ c & 0 & c & 0 & 0 & 0 & 0 & 0 \\ c & 0 & 0 & c & 0 & 0 & 0 & 0 \\ c & 0 & 0 & 0 & c & 0 & 0 & 0 \\ 0 & 0 & 0 & 0 & 0 & 0 & c & 0 \\ 0 & 0 & 0 & 0 & 0 & 0 & 0 & c \end{pmatrix}$ & $d_{RC}^6=4$ \\
\hline
8 & 1 & \centering $\left\langle \begin{pmatrix} 1 & 1 & 1 & 1 & 1 & 1 & 1 & 1 \end{pmatrix} \right\rangle$ & \centering $\begin{pmatrix} a & a & a & a & a & a & a & a \\ c & c & 0 & 0 & 0 & 0 & 0 & 0 \\ c & 0 & c & 0 & 0 & 0 & 0 & 0 \\ c & 0 & 0 & c & 0 & 0 & 0 & 0 \\ c & 0 & 0 & 0 & c & 0 & 0 & 0 \\ c & 0 & 0 & 0 & 0 & c & 0 & 0 \\ c & 0 & 0 & 0 & 0 & 0 & c & 0 \end{pmatrix}$ & $d_{RC}^8=0$ \\
\hline
8 & 2 & \centering $\left\langle \begin{pmatrix} 1 & 1 & 0 & 0 & 0 & 0 & 0 & 0 \\ 0 & 0 & 1 & 1 & 0 & 0 & 0 & 0 \end{pmatrix} \right\rangle$ & \centering $\begin{pmatrix} a & a & 0 & 0 & 0 & 0 & 0 & 0 \\ 0 & 0 & a & a & 0 & 0 & 0 & 0 \\ 0 & 0 & 0 & 0 & c & 0 & 0 & 0 \\ 0 & 0 & 0 & 0 & 0 & c & 0 & 0 \\ 0 & 0 & 0 & 0 & 0 & 0 & c & 0 \\ 0 & 0 & 0 & 0 & 0 & 0 & 0 & c \end{pmatrix}$ & $\begin{matrix} d_{RC}^2=4, \\ d_{RC}^4=8\end{matrix}$ \\
\hline
\end{tabular}
}
\resizebox{7.2cm}{!}{
\begin{tabular}{ | c | c | c | c | c | }
\hline
$n$ & $k_1$ & \centering Residue code & \centering Generator Matrix & $d_{RC}^m$ \\
\hline
8 & 2 & \centering $\left\langle \begin{pmatrix} 1 & 1 & 1 & 1 & 0 & 0 & 0 & 0 \\ 0 & 0 & 0 & 0 & 1 & 1 & 0 & 0 \end{pmatrix} \right\rangle$ & \centering $\begin{pmatrix} a & a & a & a & 0 & 0 & 0 & 0 \\ 0 & 0 & 0 & 0 & a & a & 0 & 0 \\ c & c & 0 & 0 & 0 & 0 & 0 & 0 \\ c & 0 & c & 0 & 0 & 0 & 0 & 0 \\ 0 & 0 & 0 & 0 & 0 & 0 & c & 0 \\ 0 & 0 & 0 & 0 & 0 & 0 & 0 & c \end{pmatrix}$ & $\begin{matrix} d_{RC}^2=4, \\ d_{RC}^4=8, \\ d_{RC}^6=4 \end{matrix}$ \\
\hline
8 & 2 & \centering $\left\langle \begin{pmatrix} 1 & 1 & 1 & 1 & 1 & 1 & 0 & 0 \\ 0 & 0 & 0 & 0 & 0 & 0 & 1 & 1 \end{pmatrix} \right\rangle$ & \centering $\begin{pmatrix} a & a & a & a & a & a & 0 & 0 \\ 0 & 0 & 0 & 0 & 0 & 0 & a & a \\ c & c & 0 & 0 & 0 & 0 & 0 & 0 \\ c & 0 & c & 0 & 0 & 0 & 0 & 0 \\ c & 0 & 0 & c & 0 & 0 & 0 & 0 \\ c & 0 & 0 & 0 & c & 0 & 0 & 0 \\ \end{pmatrix}$ & $\begin{matrix} d_{RC}^2=4, \\ d_{RC}^6=4, \\ d_{RC}^8=0 \end{matrix}$ \\
\hline
8 & 2 & \centering $\left\langle \begin{pmatrix} 1 & 1 & 1 & 1 & 0 & 0 & 0 & 0 \\ 0 & 0 & 0 & 0 & 1 & 1 & 1 & 1 \end{pmatrix} \right\rangle$ & \centering $\begin{pmatrix} a & a & a & a & 0 & 0 & 0 & 0 \\ 0 & 0 & 0 & 0 & a & a & a & a \\ c & c & 0 & 0 & 0 & 0 & 0 & 0 \\ c & 0 & c & 0 & 0 & 0 & 0 & 0 \\ 0 & 0 & 0 & 0 & c & c & 0 & 0 \\ 0 & 0 & 0 & 0 & c & 0 & c & 0 \\ \end{pmatrix}$ & $\begin{matrix} d_{RC}^4=0, \\ d_{RC}^8=0 \end{matrix}$ \\
\hline
8 & 2 & \centering $\left\langle \begin{pmatrix} 1 & 1 & 1 & 1 & 0 & 0 & 0 & 0 \\ 0 & 0 & 1 & 1 & 1 & 1 & 0 & 0 \end{pmatrix} \right\rangle$ & \centering $\begin{pmatrix} a & a & a & a & 0 & 0 & 0 & 0 \\ 0 & 0 & a & a & a & a & 0 & 0 \\ 0 & 0 & c & c & 0 & 0 & 0 & 0 \\ c & 0 & c & 0 & c & 0 & 0 & 0 \\ 0 & 0 & 0 & 0 & 0 & 0 & c & 0 \\ 0 & 0 & 0 & 0 & 0 & 0 & 0 & c \end{pmatrix}$ & $d_{RC}^4=4$ \\
\hline
8 & 2 & \centering $\left\langle \begin{pmatrix} 1 & 1 & 1 & 1 & 1 & 1 & 0 & 0 \\ 0 & 0 & 0 & 0 & 1 & 1 & 1 & 1 \end{pmatrix} \right\rangle$ & \centering $\begin{pmatrix} a & a & a & a & a & a & 0 & 0 \\ 0 & 0 & 0 & 0 & a & a & a & a \\ c & c & 0 & 0 & 0 & 0 & 0 & 0 \\ c & 0 & c & 0 & 0 & 0 & 0 & 0 \\ 0 & 0 & 0 & 0 & c & c & 0 & 0 \\ c & 0 & 0 & 0 & c & 0 & c & 0 \end{pmatrix}$ & $\begin{matrix} d_{RC}^4=8, \\ d_{RC}^6=4 \end{matrix}$ \\
\hline
8 & 3 & \centering $\left\langle \begin{pmatrix} 1 & 1 & 0 & 0 & 0 & 0 & 0 & 0 \\ 0 & 0 & 1 & 1 & 0 & 0 & 0 & 0 \\ 0 & 0 & 0 & 0 & 1 & 1 & 0 & 0 \end{pmatrix} \right\rangle$ & \centering $\begin{pmatrix} a & a & 0 & 0 & 0 & 0 & 0 & 0 \\ 0 & 0 & a & a & 0 & 0 & 0 & 0 \\ 0 & 0 & 0 & 0 & a & a & 0 & 0 \\ 0 & 0 & 0 & 0 & 0 & 0 & c & 0 \\ 0 & 0 & 0 & 0 & 0 & 0 & 0 & c \end{pmatrix}$ & $\begin{matrix} d_{RC}^2=2, \\ d_{RC}^4=4, \\ d_{RC}^6=2 \end{matrix}$ \\
\hline
\end{tabular}
}
\resizebox{7.2cm}{!}{
\begin{tabular}{ | c | c | c | c | c | }
\hline
$n$ & $k_1$ & \centering Residue code & \centering Generator Matrix & $d_{RC}^m$ \\
\hline

8 & 3 & \centering $\left\langle \begin{pmatrix} 1 & 1 & 1 & 1 & 0 & 0 & 0 & 0 \\ 0 & 0 & 0 & 0 & 1 & 1 & 0 & 0 \\ 0 & 0 & 0 & 0 & 0 & 0 & 1 & 1 \end{pmatrix} \right\rangle$ & \centering $\begin{pmatrix} a & a & a & a & 0 & 0 & 0 & 0 \\ 0 & 0 & 0 & 0 & a & a & 0 & 0 \\ 0 & 0 & 0 & 0 & 0 & 0 & a & a \\ c & c & 0 & 0 & 0 & 0 & 0 & 0 \\ c & 0 & c & 0 & 0 & 0 & 0 & 0 \\ \end{pmatrix}$ & $\begin{matrix} d_{RC}^2=4, \\ d_{RC}^4=0, \\ d_{RC}^6=4, \\ d_{RC}^8=0 \end{matrix}$ \\
\hline
8 & 3 & \centering $\left\langle \begin{pmatrix} 1 & 1 & 1 & 1 & 0 & 0 & 0 & 0 \\ 0 & 0 & 1 & 1 & 1 & 1 & 0 & 0 \\ 0 & 0 & 0 & 0 & 0 & 0 & 1 & 1 \end{pmatrix} \right\rangle$ & \centering $\begin{pmatrix} a & a & a & a & 0 & 0 & 0 & 0 \\ 0 & 0 & a & a & a & a & 0 & 0 \\ 0 & 0 & 0 & 0 & 0 & 0 & a & a \\ 0 & 0 & c & c & 0 & 0 & 0 & 0 \\ c & 0 & c & 0 & c & 0 & 0 & 0 \end{pmatrix}$ & $\begin{matrix} d_{RC}^2=4, \\ d_{RC}^4=4, \\ d_{RC}^6=4 \end{matrix}$ \\
\hline
8 & 3 & \centering $\left\langle \begin{pmatrix} 1 & 0 & 0 & 0 & 1 & 1 & 1 & 0 \\ 0 & 1 & 0 & 1 & 0 & 1 & 1 & 0 \\ 0 & 0 & 1 & 1 & 1 & 0 & 1 & 0 \end{pmatrix} \right\rangle$ & \centering $\begin{pmatrix} a & 0 & 0 & 0 & a & a & a & 0 \\ 0 & a & 0 & a & 0 & a & a & 0 \\ 0 & 0 & a & a & a & 0 & a & 0 \\ c & c & 0 & 0 & 0 & c & 0 & 0 \\ c & 0 & c & 0 & c & 0 & 0 & 0 \\ 0 & c & c & c & 0 & 0 & 0 & 0 \\ 0 & 0 & 0 & 0 & 0 & 0 & 0 & c \end{pmatrix}$ & $d_{RC}^4=4$ \\
\hline
8 & 3 & \centering $\left\langle \begin{pmatrix} 1 & 1 & 0 & 0 & 0 & 0 & 1 & 1 \\ 0 & 0 & 1 & 1 & 0 & 0 & 1 & 1 \\ 0 & 0 & 0 & 0 & 1 & 1 & 1 & 1 \end{pmatrix} \right\rangle$ & \centering $\begin{pmatrix} a & a & 0 & 0 & 0 & 0 & a & a \\ 0 & 0 & a & a & 0 & 0 & a & a \\ 0 & 0 & 0 & 0 & a & a & a & a \\ 0 & 0 & 0 & 0 & 0 & 0 & c & c \\ c & 0 & c & 0 & c & 0 & c & 0 \end{pmatrix}$ & $\begin{matrix} d_{RC}^4=0, \\ d_{RC}^6=4 \end{matrix}$ \\
\hline
8 & 4 & \centering $\left\langle \begin{pmatrix} 1 & 1 & 0 & 0 & 0 & 0 & 0 & 0 \\ 0 & 0 & 1 & 1 & 0 & 0 & 0 & 0 \\ 0 & 0 & 0 & 0 & 1 & 1 & 0 & 0 \\ 0 & 0 & 0 & 0 & 0 & 0 & 1 & 1 \end{pmatrix} \right\rangle$ & \centering $\begin{pmatrix} a & a & 0 & 0 & 0 & 0 & 0 & 0 \\ 0 & 0 & a & a & 0 & 0 & 0 & 0 \\ 0 & 0 & 0 & 0 & a & a & 0 & 0 \\ 0 & 0 & 0 & 0 & 0 & 0 & a & a \end{pmatrix}$ & $\begin{matrix} d_{RC}^2=2, \\ d_{RC}^4=0, \\ d_{RC}^6=2, \\ d_{RC}^8=0 \end{matrix}$ \\
\hline
8 & 4 & \centering $\left\langle \begin{pmatrix} 1 & 0 & 0 & 0 & 1 & 1 & 1 & 0 \\ 0 & 1 & 0 & 0 & 1 & 1 & 0 & 1 \\ 0 & 0 & 1 & 0 & 1 & 0 & 1 & 1 \\ 0 & 0 & 0 & 1 & 0 & 1 & 1 & 1 \end{pmatrix} \right\rangle$ & \centering $\begin{pmatrix} a & 0 & 0 & 0 & a & a & a & 0 \\ 0 & a & 0 & 0 & a & a & 0 & a \\ 0 & 0 & a & 0 & a & 0 & a & a \\ 0 & 0 & 0 & a & 0 & a & a & a \\ c & c & c & 0 & c & 0 & 0 & 0 \\ c & c & 0 & c & 0 & c & 0 & 0 \\ c & 0 & c & c & 0 & 0 & c & 0 \\ 0 & c & c & c & 0 & 0 & 0 & c \end{pmatrix}$ & $d_{RC}^4=0$ \\
\hline
\end{tabular}
}
\end{adjustbox}
\end{table}

\nocite{King, Gaborit}

\end{document}